\ifdefined\ACCESS
  \documentclass{ieeeaccess}
\else
  \documentclass[journal]{IEEEtran}
\fi
\usepackage{cite}
\usepackage{amsmath,amssymb,amsfonts,amsthm,mathtools,bm}
\usepackage{algorithm}
\usepackage{algpseudocode}
\usepackage{graphicx}
\usepackage{textcomp}
\usepackage{booktabs,multirow,array}
\usepackage{siunitx}
\usepackage{enumitem}
\usepackage[nameinlink,noabbrev]{cleveref}
\usepackage{microtype}
\usepackage{url}

\graphicspath{{figures/}}
\newdimen\xfigwd
\xfigwd=\columnwidth
\def\BibTeX{{\rm B\kern-.05em{\sc i\kern-.025em b}\kern-.08em
    T\kern-.1667em\lower.7ex\hbox{E}\kern-.125emX}}

\newcommand{\M}{\mathbf{M}}
\newcommand{\K}{\mathbf{K}}
\newcommand{\C}{\mathbf{C}}
\newcommand{\A}{\mathbf{A}}
\newcommand{\Hq}{\mathbf{H}}
\newcommand{\I}{\mathbf{I}}
\newcommand{\PhiM}{\boldsymbol{\Phi}}
\newcommand{\ph}{\boldsymbol{\phi}}
\newcommand{\yy}{\mathbf{y}}
\newcommand{\uu}{\mathbf{u}}
\newcommand{\ff}{\mathbf{f}}
\newcommand{\rr}{\mathbf{r}}
\newcommand{\norm}[1]{\left\lVert #1\right\rVert}
\newcommand{\tr}{\operatorname{tr}}
\newcommand{\Var}{\operatorname{Var}}
\newcommand{\diag}{\operatorname{diag}}
\newcommand{\MAC}{\operatorname{MAC}}
\newcommand{\epoch}{\mathrm{ep}}

\newtheorem{theorem}{Theorem}
\newtheorem{proposition}{Proposition}

\newtheorem{remark}{Remark}

\ifdefined\ACCESS\else
  \newcommand{\PARstart}{\IEEEPARstart}
  \newenvironment{keywords}{\begin{IEEEkeywords}}{\end{IEEEkeywords}}
\fi

\crefname{figure}{Fig.}{Figs.}
\Crefname{figure}{Fig.}{Figs.}
\crefname{table}{Table}{Tables}
\Crefname{table}{Table}{Tables}
\crefname{algorithm}{Algorithm}{Algorithms}
\Crefname{algorithm}{Algorithm}{Algorithms}
\crefname{remark}{Remark}{Remarks}
\Crefname{remark}{Remark}{Remarks}
\crefname{proposition}{Proposition}{Propositions}
\Crefname{proposition}{Proposition}{Propositions}
\crefname{section}{Section}{Sections}
\Crefname{section}{Section}{Sections}

\begin{document}
\ifdefined\ACCESS
\history{}
\doi{}
\vol{7}
\year{2026}
\fi

\title{Qubits Are Not Enough: Deflation and Measurement Limits in Variational Quantum Modal Analysis}

\ifdefined\ACCESS
\author{\uppercase{Bikalpa Gautam}\authorrefmark{1}, \uppercase{Akil Raj Subedi}\authorrefmark{2}}
\address[1]{Department of Civil, Environmental and Geomatic Engineering, ETH Zurich, 8093 Zurich, Switzerland}
\address[2]{Zynga Inc., Bangalore, India}

\markboth
{B. Gautam and A. R. Subedi: Qubits Are Not Enough: Deflation and Measurement Limits}
{B. Gautam and A. R. Subedi: Qubits Are Not Enough: Deflation and Measurement Limits}

\corresp{Corresponding author: Bikalpa Gautam (bgautam@ethz.ch)}
\else
\author{Bikalpa~Gautam and Akil~Raj~Subedi%
\thanks{B. Gautam is with the Department of Civil, Environmental and Geomatic
Engineering, ETH Zurich, 8093 Zurich, Switzerland (e-mail: bgautam@ethz.ch).}%
\thanks{A. R. Subedi is with Zynga Inc., Bangalore, India.}}
\fi

\ifdefined\ACCESS\else\maketitle\fi

\begin{abstract}
Amplitude encoding stores an $N$-degree-of-freedom structural mode using $\lceil\log_2 N\rceil$ qubits, but compact encoding does not by itself make quantum modal analysis practical. This article develops an evaluation framework for variational quantum eigensolvers in finite-element structural dynamics. A Cholesky transformation maps the generalized eigenproblem to a symmetric mass-orthonormal Hamiltonian, so Euclidean quantum-state orthogonality is identical to structural mass orthogonality. A two-scale normalization separates optimizer conditioning from certified padding. For higher modes, the known deflation penalty condition is sharpened to a necessary-and-sufficient threshold, and a Davis--Kahan bound quantifies the effect of approximate lower-mode projectors. Numerical studies separate mapping correctness, ansatz expressibility, preprocessing, Pauli growth, finite-shot uncertainty, and damage localization. For 128-degree-of-freedom chains, consistent-mass whitening raises transformed-operator density from 0.023 to 0.308 and Pauli terms from 576 to 2950; under the locality-preserving ordering studied here this worsens in two and three dimensions, where a subspace diagnostic is also validated against a genuine degenerate mode pair unreachable in one dimension. Exact state-vector experiments show qubit count and circuit expressibility are distinct: shallow circuits fail even when the state dimension fits. At $10^5$ shots per Pauli term the first-mode frequency root-mean-square error remains 2.45\%, and in an oracle-state study that supplies the damaged modes exactly a 10\% stiffness loss is localized in 84.5\% of trials; an eighty-percent-power family-wise localization design estimate for the same six-degree-of-freedom benchmark nonetheless needs ${\approx}10^{8}$ circuit executions. The results sharpen the deflation condition, give validation criteria, quantify resource bottlenecks, and claim no speedup.
\end{abstract}

\begin{keywords}
Finite-element eigenproblems, measurement complexity, structural dynamics, variational quantum eigensolvers.
\end{keywords}

\ifdefined\ACCESS
\titlepgskip=-15pt
\maketitle
\fi

\section{Introduction}
\label{sec:intro}

\PARstart{M}{odal} analysis underlies seismic response, wind and pedestrian serviceability, reduced-order simulation, operational modal analysis, finite-element model updating, and vibration-based structural health monitoring \cite{Chopra2020,Ewins2000,FarrarWorden2007,FarrarWorden2012}. For a structure with $N$ unconstrained degrees of freedom (DOFs), the low-frequency modes solve
\begin{equation}
\K\ph_i=\lambda_i\M\ph_i,
\qquad \lambda_i=\omega_i^2,
\qquad f_i=\frac{\omega_i}{2\pi}.
\label{eq:gep}
\end{equation}
Classical Lanczos, subspace-iteration, and contour-based eigensolvers are mature and highly effective. A quantum method must therefore be evaluated against physical residuals, modal orthogonality, measurement cost, state-preparation cost, and end-to-end preprocessing, not against matrix dimension alone.

The variational quantum eigensolver (VQE) minimizes an expectation-valued Rayleigh quotient over a parameterized quantum state \cite{Peruzzo2014,McClean2016}, and is among the most studied near-term algorithms \cite{Preskill2018,Cerezo2021,Tilly2022}. Excited-state extensions include variational quantum deflation (VQD) \cite{Higgott2019} and subspace-search VQE \cite{Nakanishi2019}. Generalized eigenvalue formulations have been developed for finite-element problems \cite{Sato2023}, and structural demonstrations have coupled finite-element software with VQE on simulators and small processors \cite{Liu2024}. More recently, Taliadouros et al.\ used SSVQE and VQD inside Monte Carlo simulation to estimate multiple eigenvalue statistics of stochastic structural systems, including a proof-of-concept execution on IBM Sherbrooke \cite{Taliadouros2025}. A complementary line of work targets the fault-tolerant regime: Lee and Kanno compute natural frequencies and normal modes of classical coupled oscillators by qubitization and phase estimation, constructing block-encoding oracles from the sparse structure of the matrix, proposing an initial-state preparation, and estimating physical-qubit counts and runtimes for a fault-tolerant machine \cite{LeeKanno2025}. The present paper addresses the complementary near-term question---whether the variational route, whose appeal is precisely that it needs no fault tolerance, is limited by encoding, expressibility, or measurement---and finds the binding constraint to be measurement rather than qubit count. Accordingly, neither the classical Cholesky reduction nor the use of VQD/SSVQE for multiple structural eigenvalues is claimed as new here. Higgott et al.\ already introduce the deflated cost \cref{eq:deflatedH} and observe that $\beta_j>\varepsilon_r-\varepsilon_j$ suffices to expose the target mode \cite{Higgott2019}; that sufficiency condition is theirs, not ours. The contribution relative to this literature is the converse, showing the condition is also necessary and extending it to the padded spectrum; the deterministic certified penalty rule; the identification of the penalty that minimizes the approximate-projector certificate; the separation of padding certification from optimizer scaling; and end-to-end accounting of whitening, Pauli, finite-shot, and damage-decision costs. Existing studies establish feasibility, but do not by themselves establish scalable structural utility.

Recent structural-quantum work also explores alternatives outside the variational eigenvalue
route. Xu et al.\ map elastodynamics to Hamiltonian simulation through an energy-conservation
construction and report a superconducting-processor demonstration \cite{Xu2026}; Wagatsuma
et al.\ apply quantum-singular-value-transformation Hamiltonian simulation to simple harmonic
motion of elastic structures \cite{Wagatsuma2026}; and Xu and Hu develop a decomposition-free
potential-energy formulation aimed at avoiding matrix-to-unitary decomposition overhead
altogether, discussing condition-number precision, state preparation and barren plateaus
\cite{XuHu2026}. Together with \cite{LeeKanno2025} these establish that representation, state
preparation and output precision are algorithm-dependent rather than intrinsic to the problem.
The present work addresses the complementary question of whether a sampling-based variational
modal solver remains attractive once whitening, deflation, trainability, Pauli representation
and engineering-level measurement precision are accounted for together; its negative
conclusions are therefore specific to that setting and are not claims about these alternative
routes.

The central premise of this paper is summarized by its title: logarithmic qubit count is not sufficient. A usable structural modal solver must address six distinct questions:
\begin{enumerate}[label=(\roman*)]
\item Is the generalized eigenproblem mapped to a Hermitian operator without changing the physical mass metric?
\item Are padded and previously extracted states excluded with provable, rather than heuristic, penalty conditions?
\item Does whitening preserve or destroy sparsity and measurement economy?
\item Can the ansatz represent the requested eigenspace and can the cost be trained?
\item Can modal observables be estimated at finite-shot precision relevant to engineering decisions?
\item Can repeated monitoring distinguish genuine structural change from numerical mode swapping without biasing the eigensolution?
\end{enumerate}

Two elements of the framework are adopted rather than claimed. The mass-orthonormal Cholesky mapping, with explicit accounting of the classical cost it incurs and the sparsity it destroys, and the two-scale spectral normalization that decouples the optimizer scale from certified safe padding, are both standard constructions used here as the basis for the analysis. Against that basis, four contributions are claimed. Each is proved or measured in this paper; components that are proposed but not evaluated are named separately afterwards rather than counted.
\begin{enumerate}[label=(\roman*)]
\item for higher modes, the converse to the known variational-deflation threshold---showing the condition is necessary as well as sufficient---its extension to the padded spectrum, a deterministic certified penalty rule, and the identification of the penalty $\beta_j^\star$ that minimizes the approximate-projector certificate, which is not the smallest admissible one;
\item an exact closed form for the nonzero Pauli count of a lumped-mass mesh, $(n_q+2)2^{\,n_q-1}$, with a checkable necessary-and-sufficient condition for it to be attained, showing that within the locality-preserving nearest-neighbour grid family studied here it is consistent-mass whitening, and not the spatial dimension or the mesh size, that drives the Pauli representation from the $O(N\log N)$ lumped bound toward the quadratic real-symmetric limit;
\item end-to-end resource accounting \emph{for the whitened variational pipeline}---whitening, Pauli growth, commuting groups, state preparation, finite-shot frequency uncertainty, and damage-decision cost---including the $O(N^{4})$ penalty that fixed \emph{relative} frequency accuracy imposes on the shot budget, and supported by a two- and three-dimensional generalization in which densification worsens, and appears later in $N$, with spatial dimension under the locality-preserving ordering studied here, together with a degeneracy-diagnostic validation unreachable in one dimension;
\item evidence that expressibility and trainability are in tension for this ansatz family: over $2\le n_q\le10$ the gradient variance shows a finite-size decay rate comparable to $2^{-n_q}$ at $L=8$, already at depths far below the parameter-count threshold for universal real-state expressibility.
\end{enumerate}

Two further procedures are developed but deliberately not claimed, because neither is evaluated experimentally: a lexicographic refinement that replaces weighted sums of energy, variance and leakage by a constrained problem, and a post-hoc temporal mode-tracking rule that assigns mode labels after independent eigensolution instead of through continuity penalties inside it. Both are stated in full in the supplementary material rather than in the article, because the evaluation framework is incomplete without them but neither is a validated result.

\Cref{fig:workflow} summarizes the end-to-end pipeline analyzed in the rest of the paper.

\begin{figure*}[t]
\centering
\includegraphics[width=\textwidth]{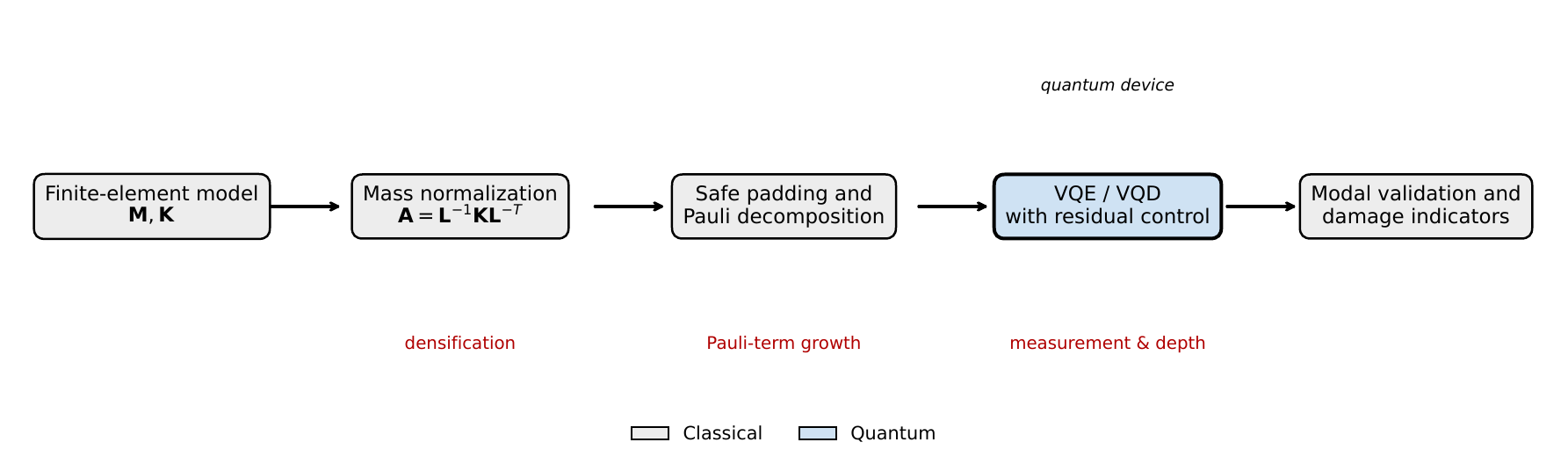}
\caption{End-to-end pipeline. A finite-element model $(\M,\K)$ is mass-normalized by the Cholesky map $\A=\mathbf L^{-1}\K\mathbf L^{-T}$, spectrally padded and Pauli-decomposed, solved by a variational eigensolver (VQE/VQD) with residual control, and finally converted into modal-validation and damage indicators. Each stage is analyzed for correctness, cost, and measurement precision separately. Red labels mark the dominant bottleneck stages; their asymptotic costs are quantified in \cref{tab:complexity}.}
\label{fig:workflow}
\end{figure*}

\section{Structural-dynamics foundation}
\label{sec:structural}

For a linear time-invariant structure,
\begin{equation}
\M\ddot{\uu}(t)+\C\dot{\uu}(t)+\K\uu(t)=\mathbf B_f\ff(t).
\label{eq:eom}
\end{equation}
Under uniform support acceleration $\ddot u_g(t)$,
\begin{equation}
\M\ddot{\uu}+\C\dot{\uu}+\K\uu=-\M\rr\ddot u_g(t).
\label{eq:base}
\end{equation}
For undamped free vibration, substitute $\uu(t)=\ph e^{\mathrm i\omega t}$ into \cref{eq:eom} to obtain \cref{eq:gep}. For a stable restrained structure, $\M$ and $\K$ are symmetric, $\M\succ0$, and the modes may be mass normalized:
\begin{align}
\PhiM_m^T\M\PhiM_m&=\I_m,\label{eq:massorth}\\
\PhiM_m^T\K\PhiM_m&=\boldsymbol\Omega_m^2,
\qquad \boldsymbol\Omega_m=\diag(\omega_1,\ldots,\omega_m).
\label{eq:stifforth}
\end{align}
The modal transformation $\uu=\PhiM_m\mathbf q$ yields, under classical damping,
\begin{equation}
\ddot{\mathbf q}+2\boldsymbol\Xi\boldsymbol\Omega_m\dot{\mathbf q}+\boldsymbol\Omega_m^2\mathbf q
=\PhiM_m^T\mathbf B_f\ff,
\label{eq:modal_eom}
\end{equation}
where $\boldsymbol\Xi=\diag(\xi_1,\ldots,\xi_m)$.

Finite-element assembly gives
\begin{equation}
\K=\sum_{e=1}^{n_e}\mathbf A_e^T\K_e\mathbf A_e,
\qquad
\M=\sum_{e=1}^{n_e}\mathbf A_e^T\M_e\mathbf A_e.
\label{eq:assembly}
\end{equation}
Boundary conditions are imposed before the quantum mapping. Penalty-enforced supports are avoided because artificial high stiffness can worsen conditioning and contaminate the spectrum.

\section{Mass-orthonormal Hermitian mapping and preprocessing}
\label{sec:mapping}

\subsection{Cholesky mapping}

The nonsymmetric product $\M^{-1}\K$ has the correct eigenvalues but is generally not Hermitian in the Euclidean metric. Let
\begin{equation}
\M=\mathbf L\mathbf L^T
\label{eq:chol}
\end{equation}
be a Cholesky factorization and define
\begin{equation}
\yy_i=\mathbf L^T\ph_i,
\qquad
\A=\mathbf L^{-1}\K\mathbf L^{-T}.
\label{eq:map}
\end{equation}
Then
\begin{equation}
\A\yy_i=\lambda_i\yy_i.
\label{eq:standard_eig}
\end{equation}

\begin{proposition}[Mass--Euclidean isometry]
If $\M\succ0$ and $\K=\K^T$, then $\A=\A^T$ and
\begin{equation}
\yy_i^T\yy_j=\ph_i^T\M\ph_j.
\label{eq:isometry}
\end{equation}
Thus Euclidean orthogonality of quantum states is exactly structural mass orthogonality after back-transformation.
\end{proposition}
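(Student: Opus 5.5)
The plan is a direct computation resting on two facts: $\M\succ0$ makes the Cholesky factor $\mathbf L$ of \cref{eq:chol} a well-defined real lower-triangular matrix with positive diagonal, hence invertible, and inversion commutes with transposition, so $(\mathbf L^{-1})^T=(\mathbf L^T)^{-1}=\mathbf L^{-T}$. I will establish the three claims in order: symmetry of $\A$, the eigen-correspondence \cref{eq:standard_eig}, and the inner-product identity \cref{eq:isometry}.

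\emph{Symmetry.} Transpose the definition in \cref{eq:map} and use $\K^T=\K$:
\begin{equation*}
\A^T=(\mathbf L^{-1}\K\mathbf L^{-T})^T=(\mathbf L^{-T})^T\K^T(\mathbf L^{-1})^T=\mathbf L^{-1}\K\mathbf L^{-T}=\A .
\end{equation*}
Since $\A$ is real, it is Hermitian, so it is a legitimate Hamiltonian for the quantum encoding.

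\emph{Eigen-correspondence.} Substitute $\M=\mathbf L\mathbf L^T$ into \cref{eq:gep} to get $\K\ph_i=\lambda_i\mathbf L\mathbf L^T\ph_i$. Left-multiply by $\mathbf L^{-1}$ and insert $\mathbf L^{-T}\mathbf L^T=\I$ in front of $\ph_i$:
\begin{equation*}
\mathbf L^{-1}\K\mathbf L^{-T}\,\mathbf L^T\ph_i=\lambda_i\mathbf L^T\ph_i ,
\end{equation*}
which is exactly $\A\yy_i=\lambda_i\yy_i$. The map $\ph\mapsto\mathbf L^T\ph$ is invertible, so the two eigenproblems have identical spectra. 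Every eigenvector of $\A$ back-transforms to a mode via $\ph_i=\mathbf L^{-T}\yy_i$.

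\emph{Isometry.} For any pair of modes,
\begin{equation*}
\yy_i^T\yy_j=\ph_i^T\mathbf L\mathbf L^T\ph_j=\ph_i^T\M\ph_j .
\end{equation*}
This is \cref{eq:isometry}. Taking $i=j$ shows mass normalization is the same as Euclidean normalization, and taking $i\neq j$ shows mass orthogonality is the same as state orthogonality. So an orthonormal eigenbasis of $\A$ back-transforms to a basis satisfying \cref{eq:massorth}. Applying the same substitution to $\K$ gives $\yy_i^T\A\yy_j=\ph_i^T\K\ph_j$, which recovers \cref{eq:stifforth}.

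I expect no real obstacle. The one point needing care is the interpretive sentence. The identity is stated for the back-transformation $\ph=\mathbf L^{-T}\yy$ paired with the Euclidean inner product on $\yy$. Amplitude encoding stores the normalized vector $\yy/\norm{\yy}$, and that rescaling preserves orthogonality but not the mass-normalization constant. For repeated eigenvalues I would add that $\A$ is symmetric, so an orthonormal basis of each eigenspace can still be chosen. That basis pulls back to a mass-orthonormal set of modes, so the statement holds for the degenerate case as well.
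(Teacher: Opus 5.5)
Your proof is correct and follows essentially the same route as the paper, which gets symmetry by transposing $\A=\mathbf L^{-1}\K\mathbf L^{-T}$ and the isometry from $\yy_i^T\yy_j=\ph_i^T\mathbf L\mathbf L^T\ph_j=\ph_i^T\M\ph_j$. Your extra steps are sound but go beyond the paper's two-line proof: the eigen-correspondence, the stiffness identity $\yy_i^T\A\yy_j=\ph_i^T\K\ph_j$, and the remarks on amplitude-encoding normalization and degenerate eigenspaces.
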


\begin{proof}
Symmetry follows from $\A^T=\mathbf L^{-1}\K^T\mathbf L^{-T}=\A$. Moreover,
$\yy_i^T\yy_j=\ph_i^T\mathbf L\mathbf L^T\ph_j=\ph_i^T\M\ph_j$.
\end{proof}

The transformation should be implemented with triangular solves, not explicit inverses. Dense Cholesky has $O(N^3)$ arithmetic cost and $O(N^2)$ storage. Sparse cost depends on mesh dimension, ordering, and fill. For a class of two-dimensional finite-element meshes under nested dissection, Cholesky requires $O(N^{3/2})$ operations and $O(N\log N)$ nonzeros \cite{GeorgeNg1988}. These costs are classical preprocessing and must be included in any end-to-end comparison.

\subsection{Whitening can destroy locality}

Sparsity of $\M$ and $\K$ does not imply sparsity of $\A$. When $\M$ is diagonal, whitening is a diagonal scaling and often preserves graph locality. With a consistent mass matrix, $\mathbf L^{-1}$ can introduce long-range couplings. Define
\begin{equation}
\delta_A(\tau)=\frac{1}{N^2}\#\{(i,j):|A_{ij}|>\tau\norm{\A}_{\max}\}.
\label{eq:density}
\end{equation}
The transformed density $\delta_A$ is therefore an algorithmic observable, not a minor implementation detail. Every density reported in this paper uses $\tau=10^{-12}$. This is a different threshold from the $10^{-10}$ coefficient cut used for Pauli counts in \cref{sec:resources}, and the two are not interchangeable: $\tau$ is a \emph{relative} cut on matrix entries, scaled by $\norm{\A}_{\max}$, whereas the Pauli cut is an \emph{absolute} one applied to the coefficients of an operator already normalized to unit spectral radius. Neither implies the other, and \cref{sec:resources} shows the consistent-mass Pauli count to be sensitive to its cut where the density is not.

A generalized Rayleigh quotient can avoid explicit whitening,
\begin{equation}
\mathcal R_G(\theta)=
\frac{\langle\psi(\theta)|\K|\psi(\theta)\rangle}
{\langle\psi(\theta)|\M|\psi(\theta)\rangle},
\label{eq:generalized_rq}
\end{equation}
as in generalized variational formulations \cite{Sato2023}. The trade-off is that both operators must be measured, the denominator is strictly positive because $\M\succ0$ but its usable lower bound is governed by $\kappa(\M)$, and higher-mode orthogonality must be enforced in the $\M$ metric. The Cholesky and generalized-quotient routes should therefore be compared rather than assuming that whitening is free; \cref{sec:genquotient} carries out that comparison at the level of operator representation.

\section{Two-scale spectral normalization and safe padding}
\label{sec:scaling}

\subsection{Two-scale normalization}

A Gershgorin upper bound is
\begin{equation}
U_G=\max_i\left(A_{ii}+\sum_{j\ne i}|A_{ij}|\right)
\ge\lambda_{\max}(\A).
\label{eq:gersh}
\end{equation}
Using $U_G$ as the sole scaling factor is safe but may compress the physical spectrum near zero. We therefore distinguish:
\begin{itemize}
\item an \emph{optimization scale} $s_t>0$, estimated for example by a short power or Lanczos iteration; and
\item the \emph{certified physical bound} $U=U_G/s_t$.
\end{itemize}
The optimization Hamiltonian is
\begin{equation}
\widetilde\A=\frac{\A}{s_t}.
\label{eq:tightscale}
\end{equation}
No assumption that $s_t\ge\lambda_{\max}$ is required.

\subsection{Spectral-safe padding}

Let $n_q=\lceil\log_2N\rceil$ and $D=2^{n_q}$. Define
\begin{equation}
\boxed{
\Hq_p=\begin{bmatrix}
\widetilde\A&\mathbf0\\
\mathbf0&\alpha\I_{D-N}
\end{bmatrix},
\qquad
\alpha=U+\delta_p,
\quad \delta_p>0.
}
\label{eq:safepad}
\end{equation}

\begin{proposition}[Safe padding under arbitrary positive scaling]
Every physical eigenvalue of $\widetilde\A$ is at most $U$, whereas every padded eigenvalue equals $U+\delta_p$. Hence padding cannot enter the physical low spectrum even when $s_t$ underestimates $\lambda_{\max}(\A)$.
\end{proposition}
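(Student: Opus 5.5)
The proof is a short chain of inequalities: pass the Gershgorin bound \cref{eq:gersh} through the scaling \cref{eq:tightscale}, then read off the block structure of \cref{eq:safepad}. Since $\A=\A^T$ by the mass--Euclidean isometry proposition, every eigenvalue of $\A$ is real. Gershgorin's circle theorem places each eigenvalue in some disc centred at $A_{ii}$ with radius $\sum_{j\ne i}|A_{ij}|$. The right end of that disc is at most $U_G$, so $\lambda_{\max}(\A)\le U_G$ as stated in \cref{eq:gersh}.

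Next I will use $s_t>0$. The eigenvalues of $\widetilde\A=\A/s_t$ are exactly $\lambda_i/s_t$, and division by a positive number preserves order. Hence every physical eigenvalue of $\widetilde\A$ satisfies
\[
\lambda_i/s_t\le U_G/s_t=U .
\]
This step uses only positivity of $s_t$, not $s_t\ge\lambda_{\max}$. An underestimate of $s_t$ simply makes both $\lambda_{\max}/s_t$ and $U$ larger, and the inequality between them still holds.

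For the padding, $\Hq_p$ is block diagonal, so its spectrum is the union of the spectrum of $\widetilde\A$ and the spectrum of $\alpha\I_{D-N}$. The second block contributes only the value $\alpha=U+\delta_p$, with multiplicity $D-N$. The eigenvectors of the two blocks are supported on complementary coordinates, so the physical eigenpairs $(\lambda_i/s_t,\yy_i)$, zero-extended, remain eigenpairs of $\Hq_p$ without any mixing.

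Combining the two blocks, every physical eigenvalue is at most $U$, which is strictly less than $U+\delta_p$ because $\delta_p>0$. Therefore the lowest $N$ eigenvalues of $\Hq_p$, counted with multiplicity, are exactly the physical ones, separated from the padded cluster by a gap of at least $\delta_p$. In particular, padding cannot enter the physical low spectrum.

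No step is a real obstacle. The only point needing care is to state explicitly that Gershgorin yields an upper bound on $\lambda_{\max}$ for a symmetric real matrix regardless of the signs of the off-diagonal entries, and that this bound survives positive rescaling. That is where the claimed independence from the choice of $s_t$ comes from.
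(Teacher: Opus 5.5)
Your proof is correct and follows the same reasoning the paper relies on; the paper gives no separate proof and treats the result as immediate from the Gershgorin bound and the definition $U=U_G/s_t$. Dividing $\lambda_{\max}(\A)\le U_G$ by any $s_t>0$ gives $\lambda_i/s_t\le U<U+\delta_p=\alpha$, and the block-diagonal structure of $\Hq_p$ keeps the padded eigenvalue $\alpha$ separate from the physical spectrum.
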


\subsection{Effect of conservative scaling}

\begin{proposition}[Scalar-scaling law]
Let $\Hq_c=\Hq/c$ with $c>0$ and $E_c(\theta)=\langle\psi(\theta)|\Hq_c|\psi(\theta)\rangle$. Then eigenvectors and condition number are unchanged, while
\begin{equation}
\begin{aligned}
\lambda_i(\Hq_c)&=\lambda_i(\Hq)/c,\\
\nabla E_c&=\nabla E/c,\qquad
\Var_{\psi}(\Hq_c)=\Var_{\psi}(\Hq)/c^2.
\end{aligned}
\label{eq:scalinglaw}
\end{equation}
\end{proposition}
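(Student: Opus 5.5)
The plan is to prove each claim of the statement directly from linearity, since $\Hq_c$ differs from $\Hq$ only by the positive scalar $1/c$; no earlier proposition is needed beyond the fact that $\Hq$ is real symmetric, which holds for $\Hq_p$ because $\widetilde\A$ is symmetric by the Mass--Euclidean isometry proposition.

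\emph{Spectrum and eigenvectors.} If $\Hq\mathbf v=\lambda\mathbf v$, then dividing by $c$ gives $\Hq_c\mathbf v=(\lambda/c)\mathbf v$. The map $\lambda\mapsto\lambda/c$ is strictly increasing because $c>0$, so the ordering of eigenvalues is preserved. Each eigenvalue therefore keeps both its index $i$ and its eigenspace, which gives $\lambda_i(\Hq_c)=\lambda_i(\Hq)/c$ with unchanged eigenvectors, including degenerate eigenspaces. For the condition number, I will take it to mean the ratio of largest to smallest eigenvalue modulus (or singular value) whenever this is defined. That ratio is homogeneous of degree zero in $c$, so it is unchanged.

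\emph{Energy gradient.} By linearity of the expectation, $E_c(\theta)=\langle\psi(\theta)|\Hq|\psi(\theta)\rangle/c=E(\theta)/c$ for every $\theta$. Differentiating in each parameter then gives $\nabla E_c=\nabla E/c$. The same identity holds for any parameter-shift estimator, since such an estimator is a linear combination of energies at shifted parameters.

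\emph{Variance.} Write $\Var_\psi(\Hq)=\langle\Hq^2\rangle_\psi-\langle\Hq\rangle_\psi^2$. Replacing $\Hq$ by $\Hq/c$ scales the first term by $c^{-2}$, since $(\Hq/c)^2=\Hq^2/c^2$, and scales the second term by $c^{-2}$, since $\langle\Hq/c\rangle_\psi^2=\langle\Hq\rangle_\psi^2/c^2$. Hence $\Var_\psi(\Hq_c)=\Var_\psi(\Hq)/c^2$.

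The only step that needs care is fixing the definition of condition number when $\Hq$ may be singular or indefinite; I will state it for $\Hq\succ0$ and note the singular-value version in general. Everything else is immediate from linearity. I will also add a remark, which is the point of the proposition: under conservative scaling the relative shot noise per unit of gradient, roughly $\sqrt{\Var}/\norm{\nabla E}$, is invariant. What degrades is the absolute energy resolution needed to separate eigenvalues, because gaps shrink by $1/c$ while any fixed optimizer tolerance does not.
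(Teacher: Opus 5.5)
Your argument is correct and is essentially the paper's. The paper gives no written proof: it treats every claim as immediate from the linearity of $\Hq\mapsto\Hq/c$ with $c>0$, which is exactly what you work through term by term, and your closing remark on invariant relative precision matches the paper's follow-up discussion of when that invariance holds (only under coherent rescaling of tolerances).
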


Consequently, excessive scaling does not change the exact eigenvectors, but it reduces absolute gaps and gradients and can interact badly with fixed optimizer tolerances and additive calibration errors. Under ideal shot noise, both signal and standard deviation scale as $1/c$, so relative precision is unchanged if all tolerances and coefficients are rescaled coherently. In practice, coherent rescaling requires the energy stopping tolerance, deflation margins, padding offset, confidence intervals, coefficient-pruning threshold, and any gradient threshold to be transformed by the same factor. The invariance fails when software retains fixed absolute tolerances, when coefficient truncation is not rescaled, or when hardware contributes an additive bias or noise floor that does not scale with the Hamiltonian coefficients. The optimization concern is therefore numerical and hardware-relative, not a universal loss of statistical information.

The physical frequency is recovered from $\widetilde\lambda_i$ by
\begin{equation}
f_i=\frac{1}{2\pi}\sqrt{s_t\widetilde\lambda_i}.
\label{eq:freqrecover}
\end{equation}

\section{Exact deflation conditions for multiple modes}
\label{sec:deflation}

\subsection{Ground-state VQE}

For a normalized parameterized state $|\psi(\theta)\rangle=U(\theta)|0\rangle^{\otimes n_q}$,
\begin{equation}
E(\theta)=\langle\psi(\theta)|\Hq_p|\psi(\theta)\rangle
\label{eq:vqe}
\end{equation}
upper-bounds the lowest eigenvalue accessible within the ansatz manifold. This is a variational statement, not a guarantee that a nonconvex classical optimizer finds the global minimum.

\subsection{Exact deflation theorem}

Let the non-padded eigenpairs of $\Hq_p$ be $(\varepsilon_k,|u_k\rangle)$ with
\begin{equation}
\varepsilon_1\le\varepsilon_2\le\cdots\le\varepsilon_N<\alpha.
\end{equation}
For target mode $r$, define the exact deflated operator
\begin{equation}
\Hq_r=\Hq_p+\sum_{j=1}^{r-1}\beta_jP_j,
\qquad P_j=|u_j\rangle\langle u_j|.
\label{eq:deflatedH}
\end{equation}

\begin{theorem}[Necessary and sufficient deflation threshold]
Assume $\varepsilon_r<\varepsilon_{r+1}$, with $\varepsilon_{N+1}:=\alpha$ when $D>N$ and $\varepsilon_{N+1}:=+\infty$ otherwise. Then $|u_r\rangle$ is the unique ground state of $\Hq_r$ if and only if
\begin{equation}
\boxed{\beta_j>\varepsilon_r-\varepsilon_j\quad\text{for all }j<r.}
\label{eq:betathreshold}
\end{equation}
\end{theorem}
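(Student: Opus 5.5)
The plan is to diagonalize $\Hq_r$ explicitly in the eigenbasis of $\Hq_p$, after which uniqueness of the ground state reduces to a comparison of scalars. Since $\Hq_p$ is real symmetric and block diagonal by \cref{eq:safepad}, its eigenvectors form an orthonormal basis of $\mathbb R^D$. This basis consists of the physical eigenvectors $|u_1\rangle,\dots,|u_N\rangle$, padded with zeros, together with the canonical basis vectors $|e_{N+1}\rangle,\dots,|e_D\rangle$ of the padding block, each having eigenvalue $\alpha$. Each projector $P_j$ in \cref{eq:deflatedH} is diagonal in this same basis. Hence $\Hq_r$ has the same eigenvectors as $\Hq_p$, with eigenvalues
\[
\mu_j=\varepsilon_j+\beta_j\ (j<r),\qquad \mu_k=\varepsilon_k\ (r\le k\le N),\qquad \mu=\alpha\ \text{(padded, multiplicity }D-N).
\]
If the $\varepsilon_k$ are degenerate, I fix the chosen orthonormal $|u_k\rangle$ used in the definition of the $P_j$. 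The argument only uses that $\Hq_r$ is diagonal in that fixed basis.

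Next, I use the elementary fact that a real symmetric matrix has $|v\rangle$ as its \emph{unique} ground state (up to phase) if and only if $|v\rangle$ is an eigenvector whose eigenvalue is strictly smaller than every other eigenvalue in an orthonormal eigenbasis containing $|v\rangle$. This holds because the minimum of the Rayleigh quotient is attained exactly on the span of the eigenvectors achieving the minimal eigenvalue. So $|u_r\rangle$ is the unique ground state if and only if $\varepsilon_r$ is strictly below all of the other listed eigenvalues. There are three groups to check.

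\emph{Unperturbed higher modes} $k>r$, $k\le N$: since $\varepsilon_k\ge\varepsilon_{r+1}>\varepsilon_r$, the condition holds automatically by the hypothesis $\varepsilon_r<\varepsilon_{r+1}$.

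\emph{Padded states}: when $D>N$, we have $\alpha=\varepsilon_{N+1}>\varepsilon_r$, either by the convention $\varepsilon_{N+1}:=\alpha$ combined with the hypothesis, or directly from $\varepsilon_N<\alpha$ together with the safe-padding proposition. When $D=N$ there are no padded states. This is exactly where the extension to the padded spectrum enters.

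\emph{Deflated modes} $j<r$: here the requirement is $\varepsilon_j+\beta_j>\varepsilon_r$, which is precisely \cref{eq:betathreshold}.

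Sufficiency follows because all three groups are then strictly above $\varepsilon_r$. For necessity, suppose some $j<r$ has $\beta_j\le\varepsilon_r-\varepsilon_j$. Then $\langle u_j|\Hq_r|u_j\rangle=\mu_j\le\varepsilon_r$. If the inequality is strict, $|u_r\rangle$ is not a ground state at all. If equality holds, then $|u_j\rangle\perp|u_r\rangle$ attains the same minimal energy, so the ground state is not unique. Either way uniqueness fails, which proves the converse.

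The only real subtlety, and the step I would check most carefully, is the bookkeeping for degeneracies. The first issue is that $\varepsilon_j=\varepsilon_r$ can occur for some $j<r$; in that case the threshold correctly demands $\beta_j>0$. The second is making sure the padded block is covered by the convention for $\varepsilon_{N+1}$. The third is verifying that "unique" means unique up to a global phase, with a one-dimensional ground eigenspace. Everything else is diagonal arithmetic.
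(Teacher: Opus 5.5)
Your proposal is correct and follows the paper's own proof. Both diagonalize $\Hq_r$ in the eigenbasis of $\Hq_p$, read off the eigenvalues $\varepsilon_j+\beta_j$ ($j<r$), $\varepsilon_k$ ($k\ge r$) and $\alpha$ (padding), and compare each group with $\varepsilon_r$; your necessity argument and degeneracy bookkeeping spell out what the paper's one-line proof leaves implicit. One small precision: the standing assumption $\varepsilon_r\le\varepsilon_N<\alpha$ clears the padded states for every $r$, whereas the convention $\varepsilon_{N+1}:=\alpha$ combined with $\varepsilon_r<\varepsilon_{r+1}$ does so only when $r=N$.
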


\begin{proof}
The eigenvectors of $\Hq_r$ remain $|u_k\rangle$. For $k<r$, the corresponding eigenvalue is $\varepsilon_k+\beta_k$; for $k\ge r$, it is $\varepsilon_k$; padded states have eigenvalue $\alpha$. The target is the unique minimum precisely when every shifted lower eigenvalue exceeds $\varepsilon_r$ and $\varepsilon_{r+1},\alpha>\varepsilon_r$.
\end{proof}

The sufficiency direction is due to Higgott et al.\ \cite{Higgott2019}, who introduce the deflated cost \cref{eq:deflatedH} and note that $\beta_j>\varepsilon_r-\varepsilon_j$ suffices. What \cref{eq:betathreshold} adds is the converse---no smaller penalty works, so the condition is sharp---and the inclusion of the padded block through the convention $\varepsilon_{N+1}:=\alpha$, which matters because padding contributes eigenvalues that deflation must also clear.

The theorem replaces an arbitrary common penalty with a spectral condition. If a certified target upper bound $U_r\ge\varepsilon_r$ and prior eigenvalue error bounds $|\widehat\varepsilon_j-\varepsilon_j|\le\delta_j$ are available, a deterministic sufficient choice is
\begin{equation}
\boxed{
\beta_j=U_r-\widehat\varepsilon_j+\delta_j+\tau,
\qquad \tau>0.
}
\label{eq:certbeta}
\end{equation}
Indeed, $\varepsilon_j+\beta_j\ge U_r+\tau>\varepsilon_r$. Two bound policies are distinguished. A quantum-only spectral setup may use
\begin{equation}
U_r^{(G)}=U_G/s_t,
\label{eq:globalUr}
\end{equation}
which requires no auxiliary classical eigensolve but can be conservative. Alternatively, if $\vartheta_r^{(p)}$ is the $r$th Ritz value in a $p$-dimensional trial subspace, the min--max principle gives $\varepsilon_r\le\vartheta_r^{(p)}$; hence
\begin{equation}
U_r^{(R)}=\vartheta_r^{(p)}+\delta_{\mathrm{fp}}
\label{eq:ritzUr}
\end{equation}
is a tighter certified bound after allowing for floating-point error. Tighter is not automatically better here: by \cref{eq:betastar} a tighter $U_r$ combined with a small $\tau$ drives the deflated separation $g_r$ toward zero and the Davis--Kahan certificate toward vacuity, whereas the conservative Gershgorin policy \cref{eq:globalUr} overshoots the threshold and lands in the safe regime. A tight bound policy must therefore be paired with a $\tau$ of order $G_r$. A sparse $p$-step Lanczos/Ritz pass costs approximately $O(p\,\mathrm{nnz}(\A)+Np^2+p^3)$ arithmetic ($\mathrm{nnz}(\cdot)$ denoting the number of nonzero entries) and $O(Np)$ storage, and is therefore recorded explicitly as $C_{\mathrm{bound}}$ in the resource vector rather than treated as free preprocessing.

For a previously extracted mode, a practical high-confidence error radius can combine an eigenstate residual and Pauli-sampling uncertainty:
\begin{equation}
\begin{aligned}
\delta_j^{(\alpha_j)}&=\widehat r_j+
\sqrt{2\log(2/\alpha_j)\sum_{\ell}\frac{h_\ell^2}{n_\ell}},\\
\widehat r_j&=\norm{(\Hq_p-\widehat\varepsilon_j\I)|\widehat u_j\rangle}_2.
\end{aligned}
\label{eq:deltabound}
\end{equation}
The second term is a conservative Hoeffding radius for independent Pauli sampling. The residual locates the Rayleigh quotient near some eigenvalue; using it as a bound for mode $j$ additionally requires the estimated state to lie in the isolated modal interval assigned to $j$. In the exact threshold experiment below, exact eigenvalues are used only to verify theorem sharpness, so $\delta_j=0$; operational studies must report whether \cref{eq:globalUr} or \cref{eq:ritzUr} is used and include the corresponding classical cost.

\subsection{Approximate projectors and the cost of oversized penalties}

In practice, previous states are approximate. Let $\widehat P_j$ be the estimated projectors and
\begin{equation}
\widehat\Hq_r=\Hq_p+\sum_{j<r}\beta_j\widehat P_j,
\qquad
\Delta_r=\widehat\Hq_r-\Hq_r.
\label{eq:approxdefl}
\end{equation}
Then
\begin{equation}
\norm{\Delta_r}_2
\le\sum_{j<r}\beta_j\norm{\widehat P_j-P_j}_2.
\label{eq:deflperturb}
\end{equation}
Define the exact deflated separation
\begin{equation}
g_r=\min\left\{
\min_{j<r}(\varepsilon_j+\beta_j-\varepsilon_r),
\varepsilon_{r+1}-\varepsilon_r,
\alpha-\varepsilon_r
\right\}.
\label{eq:deflgap}
\end{equation}
If $\norm{\Delta_r}_2<g_r/2$, Weyl's inequality preserves an isolated target eigenvalue and the Davis--Kahan theorem gives the safe bound \cite{DavisKahan1970}
\begin{equation}
\boxed{
\sin\angle(\widehat u_r,u_r)
\le\frac{2\norm{\Delta_r}_2}{g_r}.
}
\label{eq:dkbound}
\end{equation}
Equation \eqref{eq:deflperturb} is easily misread as an argument for the smallest admissible penalty, since $\norm{\Delta_r}_2$ grows linearly with $\beta_j$. That reading is incomplete, because $\beta_j$ also enters the denominator of \cref{eq:dkbound} through $g_r$. Write
\begin{equation}
G_r=\min(\varepsilon_{r+1}-\varepsilon_r,\ \alpha-\varepsilon_r)
\label{eq:physgap}
\end{equation}
for the separation available from the physical spectrum. Then $g_r$ increases with $\beta_j$ until it saturates at $G_r$, so the certified bound is
\begin{equation}
\boxed{\text{minimized at }\ \beta_j^\star=(\varepsilon_r-\varepsilon_j)+G_r,}
\label{eq:betastar}
\end{equation}
diverges as $\beta_j$ approaches the threshold from above, and grows only linearly beyond $\beta_j^\star$. The smallest certified penalty is therefore the \emph{worst} admissible choice, not the best: each penalty should be raised until the deflated separation saturates the physical gap, and not much further. In \cref{eq:certbeta} this corresponds to sizing the total excess $U_r-\varepsilon_r+\delta_j+\tau$ against $G_r$, rather than driving $\tau$ to zero.

Two caveats attach to \cref{eq:betastar}. It locates the minimum of the \emph{certificate}, not of the error: in the perturbation experiment of \cref{sec:results} the measured target rotation continues to fall as $\beta_j$ is increased past $\beta_j^\star$, so \cref{eq:dkbound} is conservative in a $\beta$-dependent way. And $\beta_j^\star$ requires $\varepsilon_r$ and $G_r$, which are not available before the solve; the practical reading is that a conservative bound policy is safer than a tight one, not that the optimum can be hit exactly.

\subsection{No arbitrary variance or leakage weights}

Hamiltonian variance remains a valuable residual:
\begin{equation}
\sigma_H^2(\psi)=\langle\Hq_p^2\rangle-\langle\Hq_p\rangle^2
=\norm{(\Hq_p-E\I)|\psi\rangle}_2^2.
\label{eq:variance}
\end{equation}
Padding leakage is
\begin{equation}
\ell_p(\psi)=\langle\psi|P_{\mathrm{pad}}|\psi\rangle.
\label{eq:leak}
\end{equation}
Variance has also been studied as an objective in its own right \cite{Zhang2022}; the difficulty here is not variance minimization but the additive combination. Adding $\gamma\sigma_H^2+\zeta\ell_p$ to the energy creates uncalibrated trade-offs and changes the optimization landscape. The primary algorithm therefore uses only
\begin{equation}
C_r(\theta)=\langle\psi(\theta)|\widehat\Hq_r|\psi(\theta)\rangle.
\label{eq:primarycost}
\end{equation}
An optional lexicographic refinement, which preserves the energy ordering within a declared tolerance rather than hiding the trade-off in application-defined weights, is stated in the supplementary material. It is proposed and analyzed but not evaluated experimentally.

\begin{remark}[Degenerate and near-degenerate targets]\label{rem:degenerate}
Theorem~1 assumes $\varepsilon_r<\varepsilon_{r+1}$, and \cref{eq:dkbound} requires $g_r>0$. Neither holds for a degenerate target, and \cref{sec:dim-degeneracy} exhibits a genuine pair with relative gap $2.2\times10^{-16}$ arising from nothing more exotic than a reflection symmetry. Symmetric geometries---square plates, circular tanks, regular framed buildings---make such pairs common rather than exceptional. For a repeated eigenvalue the projector $P_j$ is not basis-independent within the shared eigenspace, so sequential deflation cannot isolate an individual member at all; only the invariant subspace is well defined. Block deflation, replacing $\beta_jP_j$ by $\beta_j P_{\mathcal S_j}$ onto whole eigenspaces, is the natural generalization and is not developed here. This is the same failure mode that motivates the subspace diagnostic of \cref{eq:subspace}.
\end{remark}

\begin{remark}[Scope of the guarantee]\label{rem:scope}
Theorem 1 identifies the correct ground state of the full deflated Hilbert-space operator. It does not prove global convergence of a finite-depth ansatz or a nonconvex classical optimizer. Ansatz approximation error and optimization failure are measured separately through restarts, residuals, and depth studies.
\end{remark}

\subsection{Simultaneous and targeted alternatives}

SSVQE maps orthogonal input states through a common unitary, preserving orthogonality by construction \cite{Nakanishi2019}. For a target shift $\sigma$, the folded operator
\begin{equation}
\Hq_{\sigma}=(\Hq_p-\sigma\I)^2
\label{eq:folded}
\end{equation}
selects the mode nearest $\sigma$. These alternatives are useful when sequential projector error is unacceptable or when a narrow frequency band is known in advance.

\section{Trainability and ansatz expressibility}
\label{sec:trainability}

A real structural Hamiltonian permits a real-amplitude circuit,
\begin{equation}
U_L(\theta)=
\left[\prod_{q=1}^{n_q}R_y(\theta_{qL})\right]
\prod_{\ell=1}^{L}
\left[U_{\mathrm{ent}}^{(\ell)}\prod_{q=1}^{n_q}R_y(\theta_{q,\ell-1})\right].
\label{eq:ansatz}
\end{equation}
This is a real-amplitude specialization of the hardware-efficient family \cite{Kandala2017}, whose original form applies three Euler angles per qubit per layer and therefore spans complex amplitudes; restricting each rotation to $R_Y$ confines the trial state to the real subspace, which suffices because $\Hq$ is real symmetric, and is the restriction whose consequences \cref{sec:trainability} examines. Logarithmic qubit count does not imply shallow representability. A normalized real vector in $D=2^{n_q}$ dimensions has $D-1$ independent amplitudes, whereas a depth-$L$ hardware-efficient circuit has only $O(Ln_q)$ parameters. Therefore, a fixed-depth family cannot be universally expressive as $N$ grows.

Random parameterized circuits may exhibit barren plateaus in which gradient variance decays exponentially with qubit count \cite{McClean2018}. Cost-function locality can materially alter trainability, and even shallow circuits with global costs may exhibit barren plateaus \cite{CerezoBP2021}. Binary amplitude encoding and Cholesky whitening can turn a local structural operator into a nonlocal qubit Hamiltonian, so structural locality does not guarantee a local quantum cost. This concern is directly connected to the scaling results in \cref{sec:results}: for consistent mass, the transformed density grows to 0.308 and the Pauli count to 2950 at $N=128$, indicating that increasingly many computational-basis amplitudes and Pauli supports become coupled. Such densification moves the objective away from the local-cost setting in which trainability can be more favorable and toward the global-cost regime associated with gradient concentration. The present 2--12-qubit data cannot establish a structural barren plateau, but they show that whitening-induced nonlocality can worsen the locality condition as $N$ grows independently of whether circuit depth is increased. Quantifying this interaction for larger qubit counts, alternative orderings, and domain-decomposed mappings is therefore an open problem rather than an assumed asymptotic advantage.

For a single parameter $\theta_a$---here $a=0$, the first $R_y$ angle of the initial rotation layer---the empirical trainability statistic used here is
\begin{equation}
V_g(n_q)=\Var_{\theta\sim\mathcal D}
\left[\frac{\partial E(\theta)}{\partial\theta_a}\right].
\label{eq:gradvar}
\end{equation}
A finite $n_q\le12$ study cannot prove or disprove an asymptotic barren plateau, but it can detect whether an exponential decay has already set in. At the shallowest depth it has not (\cref{sec:results}): the uniform-initialization gradient variance at $L=2$ is better described by a polynomial than by an exponential in $n_q$, and its high-qubit slope is far shallower than an exponential collapse. That favourable reading is specific to that depth, and \cref{sec:results} shows it reversing as the depth is raised toward the parameter-count threshold for universal expressibility.

\section{Recovery, structural identification, and health-monitoring observables}
\label{sec:observables}

\subsection{Physical recovery and validation vector}

After removing padded coordinates, recover and mass-normalize
\begin{equation}
\ph_i=\mathbf L^{-T}\yy_i,
\qquad
\ph_i\leftarrow\frac{\ph_i}{\sqrt{\ph_i^T\M\ph_i}}.
\label{eq:recover}
\end{equation}
The normalized equilibrium residual is
\begin{equation}
\rho_i=
\frac{\norm{\K\ph_i^Q-\lambda_i^Q\M\ph_i^Q}_2}
{\norm{\K\ph_i^Q}_2+|\lambda_i^Q|\norm{\M\ph_i^Q}_2}.
\label{eq:residual}
\end{equation}
Multi-mode errors are
\begin{align}
\epsilon_M&=\norm{(\PhiM^Q)^T\M\PhiM^Q-\I}_F,\label{eq:em}\\
\epsilon_K&=\frac{\norm{(\PhiM^Q)^T\K\PhiM^Q-(\boldsymbol\Omega^Q)^2}_F}
{\norm{(\boldsymbol\Omega^Q)^2}_F}.\label{eq:ek}
\end{align}
No scalar exponential integrity index is introduced. Instead, every mode is reported through the non-aggregated diagnostic vector
\begin{equation}
\boxed{
\mathcal D_i=
(\epsilon_{f,i},\rho_i,\sigma_{H,i}^2,\ell_{p,i},o_i,1-\MAC_{ii}^{(M)}),
}
\label{eq:diagnosticvector}
\end{equation}
where $o_i=\max_{j<i}|\langle\widehat u_j|\widehat u_i\rangle|^2$. Acceptance thresholds are application-specific and must be stated component by component.

The mass-weighted modal assurance criterion is \cite{Allemang1982}
\begin{equation}
\MAC_{ij}^{(M)}=
\frac{|(\ph_i^C)^T\M\ph_j^Q|^2}
{[(\ph_i^C)^T\M\ph_i^C][(\ph_j^Q)^T\M\ph_j^Q]}.
\label{eq:mac}
\end{equation}
For a near-degenerate group, compare invariant subspaces. If
\begin{equation}
(\mathbf Y_C)^T\mathbf Y_Q=\mathbf U\cos\boldsymbol\Theta\mathbf V^T,
\end{equation}
then
\begin{equation}
\epsilon_{\mathrm{sub}}=\norm{\sin\boldsymbol\Theta}_F
\label{eq:subspace}
\end{equation}
is invariant to rotations inside the repeated eigenspace.

\subsection{Participation and frequency response}

For base excitation,
\begin{equation}
\Gamma_i=\frac{\ph_i^T\M\rr}{\ph_i^T\M\ph_i},
\qquad M_i^*=\Gamma_i^2
\label{eq:participation}
\end{equation}
for mass-normalized modes. The truncated modal frequency-response function is
\begin{equation}
\Hq_m(\omega)=\sum_{i=1}^{m}
\frac{\ph_i\ph_i^T}
{\omega_i^2-\omega^2+2\mathrm i\xi_i\omega_i\omega}.
\label{eq:frf}
\end{equation}
Frequency accuracy without participation, shape, and response accuracy is not sufficient for structural use.

\subsection{Connection to output-only identification}

In an operational setting the comparison modes come from measurement rather than from a model. Frequency-domain decomposition takes the dominant singular vector of the output power spectral density near resonance as the measured mode shape \cite{Brincker2001}, while stochastic subspace identification extracts poles from a discrete state-space realization, giving frequencies and damping ratios from the continuous poles $s_i=\log\mu_i/\Delta t$ \cite{Peeters1999,Peeters2001,Reynders2012}. Experimental covariance estimates can weight the modal assurance criterion accordingly \cite{Reynders2008}. Two consequences matter here: identified shapes exist only at instrumented coordinates, and they carry sampling uncertainty that any comparison metric must absorb.

\subsection{Unbiased temporal mode tracking}

Repeated monitoring epochs are solved independently and mode labels assigned afterwards, so that continuity cannot pull a damaged mode toward its healthy predecessor through the eigensolver objective. The assignment rule, its uncertainty normalization, and its limitations are given in the supplementary material; like the lexicographic refinement it is proposed and analyzed but not evaluated experimentally.

\subsection{Sensor-space recovery}

For sensor selector $\mathbf c_s$,
\begin{equation}
\phi_{i,s}=\mathbf c_s^T\ph_i
=\mathbf c_s^T\mathbf L^{-T}\yy_i
=\mathbf d_s^T\yy_i,
\qquad \mathbf d_s=\mathbf L^{-1}\mathbf c_s.
\label{eq:sensor}
\end{equation}
Therefore
\begin{equation}
\phi_{i,s}=\norm{\mathbf d_s}_2\langle\widehat d_s|\psi_i\rangle.
\label{eq:sensoroverlap}
\end{equation}
This can reduce readout when only instrumented coordinates are needed, although preparing $|\widehat d_s\rangle$ has its own cost. It is resource-favorable only when $\mathbf d_s=\mathbf L^{-1}\mathbf c_s$ has exploitable sparsity, low-rank structure, or a reusable preparation oracle; if whitening makes $\mathbf d_s$ dense and unstructured, generic amplitude preparation can require $O(N)$ gates and may erase the advantage over recovering the required classical amplitudes directly.

\subsection{Damage sensitivity and modal strain energy}

Let element damage reduce stiffness:
\begin{equation}
\K(\mathbf d)=\sum_{e=1}^{n_e}(1-d_e)\mathbf A_e^T\K_e^0\mathbf A_e.
\label{eq:damageK}
\end{equation}
For mass-normalized modes and stiffness-only damage,
\begin{equation}
\frac{\partial\lambda_i}{\partial d_e}
=-\ph_i^T\K_e^0\ph_i.
\label{eq:sensitivity}
\end{equation}
The element modal-strain-energy fraction is formed with the \emph{baseline} element stiffness $\K_e^0$ in the numerator, for both the healthy and the damaged structure,
\begin{equation}
\eta_{e,i}=
\frac{\ph_i^T\K_e^0\ph_i}{\ph_i^T\K\ph_i},
\label{eq:mse}
\end{equation}
in the spirit of standard modal-strain-energy practice \cite{Stubbs1995,Cornwell1999}, though not identical to it: those formulations hold the undamaged stiffness fixed in both numerator and denominator and compare the resulting fractions as a ratio, whereas \cref{eq:mse} retains the current global $\K$ in the denominator---where it is the measured eigenvalue---and \cref{eq:damageindex} takes an absolute difference. The absolute difference is biased toward elements carrying high modal strain energy, which a ratio formulation would partly normalize away. This is not a convention of convenience: $\K_e(\mathbf d)$ is the unknown being inferred, so an operator built from it could not be assembled, let alone measured. The denominator is unaffected, since $\ph_i^T\K\ph_i=\lambda_i$ for mass-normalized modes and is therefore supplied by the eigensolver itself. With
\begin{equation}
\Hq_e=\frac{\mathbf L^{-1}\K_e^0\mathbf L^{-T}}{s_t},
\label{eq:He}
\end{equation}
one obtains
\begin{equation}
\eta_{e,i}^{(\epoch)}=
\frac{\langle\psi_i^{(\epoch)}|\Hq_e|\psi_i^{(\epoch)}\rangle}
{\langle\psi_i^{(\epoch)}|\widetilde\A^{(\epoch)}|\psi_i^{(\epoch)}\rangle},
\label{eq:qmse}
\end{equation}
Two caveats bound the practical reach of \cref{eq:qmse}. The numerator operator is built from the baseline model and is therefore available, but the state $|\psi_i\rangle$ is not. Preparing the damaged mode requires either an eigensolve of the damaged operator, which presumes exactly the model being inferred, or amplitude encoding of a mode shape identified classically from measured response---in which case $\ph_i^T\K_e^0\ph_i$ is an $O(\mathrm{nnz}(\K_e))$ classical inner product and no measurement strategy can make the quantum evaluation competitive. Operational modal analysis moreover returns shapes only at instrumented coordinates, so the full $N$-vector that amplitude encoding needs does not exist. The finite-shot study of \cref{sec:results} accordingly isolates measurement variance under a \emph{known} damaged model; it quantifies the cost of the expectation-value formulation and is not a demonstration of a deployable monitoring method.

A multimode change score is
\begin{equation}
D_e=\sum_{i=1}^{m}w_i|\eta_{e,i}^{d}-\eta_{e,i}^{0}|,
\qquad \sum_iw_i=1.
\label{eq:damageindex}
\end{equation}
This class of energy-based localization is related to established modal strain-energy methods \cite{Stubbs1995,Cornwell1999}. The finite-shot study below tests whether the quantum expectation formulation remains usable when the exact modes are no longer assumed available.

\section{Measurement and resource accounting}
\label{sec:resources}

\subsection{Pauli representation}

For $n_q$ qubits,
\begin{equation}
\Hq=\sum_{\ell=1}^{N_P}h_\ell P_\ell,
\qquad P_\ell\in\{I,X,Y,Z\}^{\otimes n_q},
\label{eq:pauli}
\end{equation}
where
\begin{equation}
h_\ell=2^{-n_q}\tr(P_\ell\Hq).
\end{equation}
A real symmetric $\Hq$ has zero coefficient on every Pauli string containing an odd number of $Y$ factors, by the parity argument used in the proof of \cref{prop:lumped}, so at most $2^{2n_q-1}+2^{n_q-1}\le2N^2$ of the $4^{n_q}$ strings can be nonzero. The worst-case Pauli basis therefore still grows quadratically in the encoded structural dimension, at half the naive rate. Several consistent-mass cases below \emph{saturate} this bound exactly---136 at $n_q=4$ and 2080 at $n_q=6$---so it is attained, not merely an upper limit. Actual nonzero count depends on ordering, padding, mass model, and whitening. Commuting-group methods can reduce circuit settings but not eliminate finite-sampling error \cite{Crawford2021,Gokhale2020}; on molecular Hamiltonians such partitioning reduces the measurement count from $O(N^4)$ to $O(N^3)$ \cite{Gokhale2020}, a reduction in distinct settings rather than in sampling variance, exactly the distinction drawn in \cref{sec:genquotient}. Throughout, a coefficient is counted as nonzero when $|h_\ell|>10^{-10}$. That absolute cut is not innocuous for the consistent-mass operator, whose coefficient magnitudes show no numerical-zero plateau: at $N=128$ the count is 2316, 2950, 3640 and 4251 for cuts $10^{-8}$, $10^{-10}$, $10^{-12}$ and $10^{-14}$, and the fitted exponent of \cref{sec:complexity} moves from 1.36 to 1.65 over the same range. Lumped-mass counts are insensitive to the cut. Reported Pauli counts should therefore be read together with the threshold that produced them.

For a diagonal mass matrix the quadratic bound is far from attained, and the exact count is available in closed form.

\begin{proposition}[Lumped-mass Pauli count]
\label{prop:lumped}
Let the mesh be a $d$-dimensional grid with axis lengths $L_1,\ldots,L_d$, each a power of two, indexed row-major so that axis $a$ carries stride $\sigma_a=\prod_{b>a}L_b$, with nearest-neighbour couplings along each axis, and let $\M$ be diagonal so that $\A=\M^{-1/2}\K\M^{-1/2}$ inherits the sparsity pattern of $\K$. Then
\begin{equation}
N_P\le(n_q+2)\,2^{\,n_q-1}=O(N\log N),
\label{eq:lumpedcount}
\end{equation}
independently of the spatial dimension $d$, with equality if and only if
\begin{equation}
S(x,z)=\sum_i(-1)^{|i\wedge z|}A_{i,i\oplus x}\neq0
\label{eq:lumpedequality}
\end{equation}
for every admissible pair $(x,z)$: that is, for every mask $x$ that is either zero or one of the $n_q$ nonzero coupling masks, together with every $z$ such that $|x\wedge z|$ is even.
\end{proposition}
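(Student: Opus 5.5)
The plan is to compute the Pauli coefficients directly in the $X^xZ^z$ parametrization of Pauli strings and then count the pairs $(x,z)$ that can survive. Every string can be written, up to a phase, as $P_{x,z}=i^{|x\wedge z|}X^{x}Z^{z}$ with $x,z\in\{0,1\}^{n_q}$, since $Y=iXZ$ and the number of $Y$ factors is $|x\wedge z|$. Because $X^x|i\rangle=|i\oplus x\rangle$ and $Z^z|i\rangle=(-1)^{|i\wedge z|}|i\rangle$, we get
\begin{equation*}
h_{x,z}=2^{-n_q}\tr(P_{x,z}\Hq)=2^{-n_q}\,c_{x,z}\,S(x,z),
\end{equation*}
where $c_{x,z}$ is a unit-modulus phase, a power of $i$ determined by $|x\wedge z|$ and the ordering convention, and $S(x,z)$ is the sum in \cref{eq:lumpedequality}. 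Since $\A$ is real, $S$ is real, while $h_{x,z}$ must be real because $\Hq$ is Hermitian. When $|x\wedge z|$ is odd the phase is $\pm i$, so $h_{x,z}=0$. This is the parity argument quoted earlier for arbitrary real symmetric $\Hq$.

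Next, $S(x,z)$ can be nonzero only if $A_{i,i\oplus x}\ne0$ for some $i$. Because $\M$ is diagonal, this means $x=0$ or $x=i\oplus j$ for some mesh edge $(i,j)$. Since all $L_a$ are powers of two, $N=D$, so there is no padding block to account for. The index $i$ splits into disjoint bit fields, one per axis, with axis $a$ occupying $b_a=\log_2L_a$ bits starting at bit $\log_2\sigma_a$. A nearest-neighbour edge along axis $a$ joins $i$ to $i+\sigma_a$ with $i_a+1<L_a$, so the increment never carries out of the field of axis $a$.

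The key step, and the one I expect to need the most care, is characterizing $i\oplus(i+\sigma_a)$. Adding one to the axis-$a$ coordinate flips the trailing run of ones together with the next zero bit. The XOR is therefore a contiguous block of $k+1$ ones starting at the lowest bit of the field, for some $k\in\{0,\dots,b_a-1\}$. Every such $k$ is realized, by taking a coordinate whose low $k$ bits are ones followed by a zero. This gives exactly $b_a$ distinct masks per axis. Masks from different axes live in disjoint fields, so they are distinct, and the total number of nonzero coupling masks is $\sum_ab_a=n_q$.

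Finally, count the admissible $z$ for each $x$. For $x=0$ every $z$ has $|x\wedge z|=0$, which is even, so there are $2^{n_q}$ choices. For each nonzero $x$, exactly half of the $z$ give $|x\wedge z|$ even, namely $2^{n_q-1}$. Hence
\begin{equation*}
N_P\le2^{n_q}+n_q2^{n_q-1}=(n_q+2)2^{n_q-1}=O(N\log N),
\end{equation*}
and this bound does not depend on $d$. Because $h_{x,z}$ is a nonzero multiple of $S(x,z)$ on every admissible pair and vanishes off the admissible set, equality holds exactly when $S(x,z)\neq0$ for all admissible $(x,z)$, which is the stated condition.
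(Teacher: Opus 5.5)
Your proof is correct and takes essentially the same route as the paper's. Both use the $i^{|x\wedge z|}X^xZ^z$ parametrization, reduce to the diagonal plus the $n_q$ coupling masks, count admissible $z$ by parity, and obtain equality exactly when $S(x,z)\neq0$ on every admissible pair. The differences are minor: you kill the odd-$Y$ coefficients by combining reality of $S$ with reality of $h_{x,z}$, where the paper re-indexes $i\mapsto i\oplus x$ and uses $\A=\A^T$ to get $S=(-1)^{|x\wedge z|}S$, and you spell out the carry argument behind the $\log_2 L_a$ masks per axis that the paper only asserts.
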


\begin{proof}
Write a Pauli string as $P=\mathrm i^{n_Y}X^{x}Z^{z}$ with $n_Y=|x\wedge z|$. Then $P$ connects only the index pairs $(i,i\oplus x)$, so a mask $x$ contributes nothing unless $A_{i,i\oplus x}\ne0$ for some $i$. Along axis $a$ the neighbour pairs are $(i,i+\sigma_a)$, and for a power-of-two axis length $L_a$ the values $i\oplus(i+\sigma_a)$ realise exactly $\log_2 L_a$ distinct masks; the strides occupy disjoint bit ranges, so masks from different axes are distinct. The nonzero off-diagonal masks therefore number $\sum_a\log_2 L_a=\log_2 N=n_q$, with $x=0$ contributing the diagonal.

Fix $x$ and let $S(x,z)=\sum_i(-1)^{|i\wedge z|}A_{i,i\oplus x}$, so that the coefficient is $2^{-n_q}\mathrm i^{n_Y}S$. Re-indexing $i\mapsto i\oplus x$ and using $\A=\A^T$ gives $S=(-1)^{|x\wedge z|}S$, so $S$ vanishes whenever $n_Y$ is odd. For $x\ne0$ exactly half of the $2^{n_q}$ masks $z$ satisfy $|x\wedge z|$ even, leaving $2^{n_q-1}$ surviving coefficients per nonzero mask, while $x=0$ admits all $2^{n_q}$. Hence $N_P$ equals the number of pairs $(x,z)$ with $|x\wedge z|$ even and $S(x,z)\ne0$, which is at most $2^{n_q}+n_q2^{n_q-1}$, with equality exactly under \cref{eq:lumpedequality}.
\end{proof}

The $x=0$ sector is not automatic and must be included. There $S(0,z)$ is the Walsh--Hadamard
transform of the diagonal of $\A$, so a constant diagonal gives $S(0,z)=0$ for every $z\neq0$ and the
bound is missed by $2^{n_q}-1$ terms regardless of the off-diagonal conditions. Heterogeneity of the
diagonal is therefore as necessary for equality as heterogeneity of the coupling diagonals, and it is
the dominant mechanism behind the uniform-chain deficit reported below: at $n_q=3,4,5$ the uniform
chain retains only one surviving pair in the $x=0$ sector, against the full $2^{n_q}$ retained by the
heterogeneous chain.

Condition \cref{eq:lumpedequality} is checkable and not vacuous. For fixed $x$, $S(x,\cdot)$ is the Walsh--Hadamard transform of the vector $v^{(x)}_i=A_{i,i\oplus x}$ carried on that coupling diagonal, so equality asks only that no transform coefficient vanish on the even-parity indices. Heterogeneous properties make this generic; constant ones destroy it. On a homogeneous mesh both the diagonal and each $v^{(x)}$ are constant across their support, their transforms concentrate on a vanishing fraction of $z$, and a uniform lumped chain carries only $3\cdot2^{\,n_q-1}-1$ nonzero terms ($5,11,23,47,95$ for $n_q=2,\ldots,6$), i.e.\ $\Theta(N)$ rather than $\Theta(N\log N)$. Evaluating \cref{eq:lumpedequality} directly reproduces both regimes exactly: 20, 48 and 112 surviving pairs for the heterogeneous chains at $n_q=3,4,5$, against 11, 23 and 47 for their uniform counterparts. The heterogeneous meshes used throughout this paper are chosen precisely to avoid such cancellations, so \cref{eq:lumpedcount} is attained there while a uniform benchmark would report fewer terms. The count is also ordering-dependent: it characterizes the lumped operator \emph{in a locality-preserving indexing}, and a scrambled DOF order moves it back toward the $O(N^2)$ worst case. \Cref{sec:ordering} measures this, and finds that a scramble saturates the worst case exactly, while even a standard bandwidth-reducing reordering costs an order of magnitude in $N_P$, and does so at no fill benefit at all under lumped mass.

\Cref{eq:lumpedcount} matches every lumped-mass count reported below exactly, in one, two, and three dimensions: $8,20,48,112,256,576$ for $n_q=2,\ldots,7$, and $1280$ and $6144$ at $n_q=8$ and $10$ (\cref{tab:scaling,tab:dimscaling}). It also explains why those counts coincide across spatial dimensions at matched $N$: the number of distinct coupling masks is $\log_2 N$ however the axes are partitioned. The measurement burden of the lumped model is therefore at most $(n_q+2)2^{\,n_q-1}=O(N\log N)$, not $O(N^2)$, and is attained by the heterogeneous meshes studied here; within this locality-preserving nearest-neighbour family the quadratic regime is reached through the mass model rather than through dimension or size, which is the mechanism quantified in \cref{sec:results}; a scrambled ordering or non-nearest-neighbour coupling would reach it by another route.

Arbitrary amplitude-state preparation may itself require gate counts exponential in qubit number in the worst case, despite logarithmic qubit storage \cite{PleschBrukner2011}. Resource reports must therefore include state preparation, two-qubit gates, circuit depth, Pauli terms, commuting groups, shots, and optimizer evaluations.

\subsection{Finite-shot propagation}

For an independently measured Pauli mean $\mu_\ell$ with $n_\ell$ shots,
\begin{equation}
\Var(\widehat\mu_\ell)=\frac{1-\mu_\ell^2}{n_\ell}.
\label{eq:paulivar}
\end{equation}
Ignoring covariance between groups,
\begin{equation}
\Var(\widehat E)=\sum_\ell h_\ell^2\frac{1-\mu_\ell^2}{n_\ell}.
\label{eq:energyvar}
\end{equation}
For fixed total shots, the continuous optimal allocation satisfies
\begin{equation}
n_\ell\propto |h_\ell|\sqrt{1-\mu_\ell^2}.
\label{eq:allocation}
\end{equation}
The delta method gives
\begin{equation}
\Var(\widehat f_i)\approx
\left(\frac{s_t}{8\pi^2f_i}\right)^2\Var(\widehat\varepsilon_i).
\label{eq:freqvar}
\end{equation}
The $1/f_i^2$ factor makes the fundamental frequency especially vulnerable when its eigenvalue is small compared with the total Pauli-estimator variance.

\subsection{Controlled error budget}

An additive scalar ``total error'' is generally not identifiable because the components interact. We report the controlled error vector
\begin{equation}
\mathcal E=
(\epsilon_{\mathrm{FE}},\epsilon_{\mathrm{map}},\epsilon_{\mathrm{ansatz}},
\epsilon_{\mathrm{opt}},\epsilon_{\mathrm{shot}},\epsilon_{\mathrm{channel}},
\epsilon_{\mathrm{readout}}).
\label{eq:errorvector}
\end{equation}
In this paper, mapping error is zero up to floating-point arithmetic; ansatz and optimizer error are separated using exact state vectors and multiple restarts; shot error is evaluated by Monte Carlo sampling; and a stylized channel/readout sensitivity is reported only as a diagnostic. Finite-element modelling error and real-device error remain unquantified and are explicitly excluded from claims.

No dimensionally mixed ``information efficiency'' is defined. Instead, the raw resource vector is reported:
\begin{equation}
\begin{aligned}
\mathcal R=(&n_q,N_P,N_G,N_{\mathrm{shot}},N_{\mathrm{eval}},D_{\mathrm{circ}},N_{2q},\\
&C_{\mathrm{asm}},C_{\mathrm{chol}},C_{\mathrm{bound}},C_{\mathrm{Pauli}},C_{\mathrm{prep}}).
\end{aligned}
\label{eq:resourcevector}
\end{equation}
where $N_G$ is the number of measurement groups; $C_{\mathrm{asm}}$, $C_{\mathrm{chol}}$, $C_{\mathrm{bound}}$, and $C_{\mathrm{Pauli}}$ denote classical matrix assembly, whitening, spectral-bound, and Pauli-construction costs; and $C_{\mathrm{prep}}$ is quantum state-preparation cost. When the global Gershgorin policy \cref{eq:globalUr} is used, $C_{\mathrm{bound}}$ contains only the row-sum calculation. When a Ritz/Lanczos bound is used, its iterations, matrix--vector products, storage, and projected eigensolve are reported explicitly.

\subsection{End-to-end asymptotic cost}
\label{sec:complexity}

The resource vector \cref{eq:resourcevector} collapses into asymptotic order for the tested one-dimensional chains, where $\M$ and $\K$ are banded with constant half-bandwidth $b$. In the following, $k$ is the number of requested modes, $N_P$ the number of nonzero Pauli terms \eqref{eq:pauli}, $N_{\mathrm{eval}}$ the number of optimizer evaluations, $T_{\mathrm{it}}$ the classical Lanczos iteration count, and $\varepsilon$ the target accuracy of a single energy estimate. \Cref{tab:complexity} compares a sparse classical eigensolver with the variational quantum pipeline for the $k$ lowest modes of an $N$-DOF chain.

\begin{table}[t]
\centering
\caption{End-to-end asymptotic time and space for the $k$ lowest modes of an $N$-DOF banded 1D chain (constant half-bandwidth $b$). Dashes mark steps the classical solver does not perform.$^{\ddagger}$}
\label{tab:complexity}
\begin{tabular}{lll}
\toprule
Stage & Classical (Lanczos) & Quantum (VQD)\\
\midrule
Assembly & $O(N)$ & $O(N)$\\
Cholesky of $\M$ & $O(Nb^2)^{\P}$ & $O(Nb^2)=O(N)$\\
Whitening $\A$ & --- & $O(N^2)$\\
Pauli decomposition & --- & $O(N^2\log N)$\\
Generic real-state expressibility$^{\S}$ & --- & $\Omega(N/\log N)$ depth\\
Solve / measure & $O(kN\,T_{\mathrm{it}})$ & $O(kN^2\varepsilon^{-2}N_{\mathrm{eval}})$\\
\midrule
Total time & $O(kN\,T_{\mathrm{it}})$ & $O(kN^2\varepsilon^{-2}N_{\mathrm{eval}})$\\
Space & $O(Nk)$ & $O(N)+O(N^2)^{\dagger}$\\
\bottomrule
\end{tabular}

\vspace{0.35em}
\parbox{\columnwidth}{\footnotesize $T_{\mathrm{it}}$ is the Lanczos iteration count: roughly constant with shift-invert ARPACK, $O(\sqrt N)$--$O(N)$ unpreconditioned. $^{\dagger}$$O(N)$ statevector amplitudes plus $O(N^2)$ storage for the whitened operator and its Pauli list. $^{\P}$A generalized Lanczos also factorizes $\M$; negligible for banded $\M$. $^{\S}$Every circuit execution must first prepare the trial state. \Cref{rem:depth} bounds below by $L=\Omega(N/\log_2 N)$ the depth at which this real-amplitude ansatz becomes \emph{generically} expressive over real states, by parameter counting; that is not a lower bound for any individual structural eigenstate. Arbitrary amplitude-state preparation is separately exponential in qubit number in the worst case, i.e.\ polynomial in $N$. The quantum column therefore counts circuit \emph{executions} and the classical column floating-point operations; for generic-state scaling within this ansatz, the depth bound is an additional gate-level resource proxy, while no corresponding depth lower bound is established here for an individual structural eigenstate. $^{\ddagger}$Padding and the Gershgorin bound cost $O(N)$ and $O(N^2)$, both dominated by whitening. The tighter Ritz policy \eqref{eq:ritzUr} instead costs $O(p\,\mathrm{nnz}(\A)+Np^2+p^3)$ time and $O(Np)$ space, reported separately as $C_{\mathrm{bound}}$. The whitening and Pauli rows assume band-aware triangular solves and a fast Walsh--Hadamard projection; the reference implementation uses dense LAPACK and an explicit term-by-term projection, so its wall time is not the tabulated asymptote.}
\end{table}

Three consequences follow. First, preprocessing is not the bottleneck for these chains: the banded Cholesky is $O(N)$, not the $O(N^3)$ dense cost noted in \cref{sec:mapping}. Second, the classical solver acts on the sparse $\K$ directly, so its cost scales with $\mathrm{nnz}(\K)=O(N)$, whereas whitening densifies $\A$---from density $0.023$ to $0.308$ between lumped and consistent mass at $N=128$---so the Pauli representation carries $N_P=O(N^2)$ terms in the worst case ($2^{2n_q-1}+2^{n_q-1}\le2N^2$). This quadratic figure is an upper bound: over the tested range the consistent-mass density and Pauli count grow sub-quadratically (empirical exponent ${\approx}1.4$--$1.5$ fitted over $N\ge16$, \cref{fig:pauliscale}), and no universal quadratic law is claimed. The local exponent moreover falls with $N$ over $N\ge8$, reaching 1.25 for the last tested doubling, so the growth is decelerating rather than settling onto a power law. Even so, merely constructing and storing that operator is superlinear in both time and space, and already exceeds the entire classical eigensolve once $N$ is large relative to $k$, before a single shot is taken. Third, the measurement phase $O(kN^2\varepsilon^{-2}N_{\mathrm{eval}})$ dominates the quantum column; by \cref{eq:freqvar} the low-frequency modes are the most sampling-intensive because the eigenvalue-to-frequency map amplifies variance as $1/f_i^2$. That expression treats $\varepsilon$ as an absolute energy tolerance, which understates the growth: the decision-relevant requirement is a fixed \emph{relative} accuracy on a frequency. With $\Hq$ normalized so $\norm{\Hq}\approx1$, $\Var(\widehat E)$ is $O(1/n)$ independently of $N$, whereas $\lambda_1/\lambda_{\max}=O(N^{-2})$ for a chain; holding the fundamental-frequency error fixed therefore demands $\varepsilon=O(N^{-2})$, hence $\varepsilon^{-2}=O(N^{4})$ and a measurement term of $O(kN^{6}N_{\mathrm{eval}})$ for the whitened route; \cref{sec:genquotient} shows that this representation factor is formulation-dependent. The mechanism is already visible at $N=6$ in \cref{tab:shotfreq}: the mode-1 to mode-4 RMSE ratio is 31.9 at $10^5$ shots per term, against $(f_4/f_1)^2=31.6$. The classical cost is left as $O(kN\,T_{\mathrm{it}})$ rather than a single power of $N$ because $T_{\mathrm{it}}$ depends on the solver and the spectral gap; under either policy it is far below the quantum column. The optimizer factor $N_{\mathrm{eval}}$ and the shots needed per gradient enter only through trainability, and \cref{sec:results} finds the gradient variance decaying polynomially rather than exponentially in $n_q$ at fixed shallow depth, i.e.\ a polylogarithmic factor in $N$---though that favourable reading does not survive increasing the depth toward expressibility. The no-advantage conclusion rests on the measurement bottleneck either way, and so does not depend on assuming an untrainable barren plateau.

\subsection{Does avoiding whitening reduce the measurement cost?}
\label{sec:genquotient}

The generalized quotient \eqref{eq:generalized_rq} avoids forming $\A$, replacing one measured
operator by two. Whether that trades down depends on how much whitening densified the operator
to begin with. \Cref{tab:genquotient} settles the representation question: $\K$, $\M$ and $\A$
decomposed under the identical coefficient cut $10^{-10}$, the identical locality-preserving
indexing, and the same greedy grouping used throughout, each operator normalized by its own
largest eigenvalue so the absolute cut means the same thing for all three. The $N_P(\A)$ and
$G_\A$ columns reproduce \cref{tab:scaling} exactly. The scope is deliberately narrow: this is a
comparison of Pauli-term and commuting-group \emph{representation} cost only. No end-to-end
generalized-eigenvalue solver is implemented, the finite-shot bias and variance of the ratio
estimator are not measured, and no $\M$-metric deflation threshold or its measurement cost is
derived. What follows therefore establishes that whitening can be avoided at the representation
level; it does not establish a total runtime for the generalized route.

\begin{table}[t]
\centering
\caption{Operator-representation cost of the generalized quotient \eqref{eq:generalized_rq},
which measures $\K$ and $\M$, against the whitened route, which measures $\A$.
$\Sigma=N_P(\K)+N_P(\M)$ and $\rho=\Sigma/N_P(\A)$; $G$ counts greedy qubit-wise commuting
groups, $G_{\K\M}$ those of the pooled $\K$ and $\M$ term list. Coefficient cut $10^{-10}$
throughout. Dashes mark groupings not evaluated at $N=128$.}
\label{tab:genquotient}
\small
\setlength{\tabcolsep}{4pt}
\begin{tabular}{rrrrrrrr}
\toprule
$N$ & $N_P(\K)$ & $N_P(\M)$ & $\Sigma$ & $N_P(\A)$ & $\rho$ & $G_{\K\M}$ & $G_\A$\\
\midrule
\multicolumn{8}{l}{\emph{Lumped mass}}\\
4   & 8   & 4   & 12  & 8    & 1.50 & 4   & 4\\
8   & 19  & 8   & 27  & 20   & 1.35 & 8   & 8\\
16  & 42  & 16  & 58  & 48   & 1.21 & 16  & 16\\
32  & 89  & 32  & 121 & 112  & 1.08 & 32  & 32\\
64  & 184 & 64  & 248 & 256  & 0.97 & 64  & 64\\
128 & 375 & 128 & 503 & 576  & 0.87 & --- & ---\\
\midrule
\multicolumn{8}{l}{\emph{Consistent mass}}\\
4   & 8   & 8   & 16  & 10   & 1.60 & 4   & 5\\
8   & 19  & 19  & 38  & 36   & 1.06 & 8   & 14\\
16  & 42  & 42  & 84  & 136  & 0.62 & 16  & 41\\
32  & 89  & 89  & 178 & 459  & 0.39 & 32  & 101\\
64  & 184 & 184 & 368 & 1241 & 0.30 & 64  & 221\\
128 & 375 & 375 & 750 & 2950 & 0.25 & --- & ---\\
\bottomrule
\end{tabular}
\end{table}

Avoiding whitening buys back precisely what whitening costs, and only that. Under lumped mass,
\cref{prop:lumped} already guarantees that $\A$ inherits the sparsity of $\K$, so there is
nothing to recover: the generalized route needs 503 Pauli terms against 576 at $N=128$, a
13\% saving confined to the largest two sizes---the ratio exceeds one for $N\le32$---and
exactly the same number of circuit settings at every size tested. Under consistent mass, where
whitening is what destroys locality, the saving is real and grows with $N$: 750 terms against
2950 at $N=128$ (a factor of 3.9), and 64 settings against 221 at $N=64$ (a factor of 3.5).
The unwhitened pair stays banded at every size, and both are real symmetric with
nearest-neighbour couplings in the same locality-preserving indexing, so the counting argument
of \cref{prop:lumped} bounds each by $(n_q+2)2^{\,n_q-1}=O(N\log N)$---the regime whitening
forfeits under consistent mass.

The settings columns add a point the term counts miss. Pooling the two term lists never opens
a group beyond those $\K$ already requires: every $\M$ string is absorbed into an existing
$\K$ group, in all ten cases and under either concatenation order, because the two operators
share a sparsity pattern and hence a set of qubit-wise commuting classes. Measuring a second
operator therefore costs no additional circuit setting here, though the greedy count is a
heuristic upper bound and a better grouping of $\A$ would narrow the gap.

Four costs remain, none of them measured. First, the denominator must itself be estimated, so
$\mathcal R_G$ is a ratio of two noisy expectations: biased at finite shots, with a variance
combining both, and with no counterpart in the whitened formulation where the denominator is
identically one. Second, conditioning. The denominator is strictly positive for every normalized trial state
because $\M$ is positive definite; after normalization by $\lambda_{\max}(\M)$ its certified
lower bound is $\lambda_{\min}(\M)/\lambda_{\max}(\M)=1/\kappa(\M)$, so a vanishing denominator
is excluded by the structural assumptions. Poor conditioning can nonetheless make it small and
amplify the bias and variance of a ratio estimator. For the chain family studied here that risk
is mild: $\kappa(\M)\le2.50$ for lumped and $\le4.92$ for consistent mass across $N\le128$, both
plateauing, which bounds the normalized denominator below by about $0.2$. Third,
orthogonality must be enforced in the $\M$ metric, so each penalty term in \eqref{eq:deflatedH}
becomes the metric overlap $|\langle\psi|\M|\psi_j\rangle|^2$, a transition matrix element
requiring an overlap/interference measurement protocol rather than an ordinary single-state expectation, and the thresholds of \eqref{eq:betathreshold} would need restating in
that metric. Fourth, both operators must reach matched precision, so the shot budget is split
rather than inherited. A representation saving is necessary for the generalized route to help,
not sufficient.

Avoiding whitening changes more than the constant factor. Under the locality-preserving
indexing used here both $\K$ and $\M$ admit $O(N\log N)$ Pauli representations by the counting
argument of \cref{prop:lumped}, whereas the whitened consistent-mass operator can approach the
$O(N^2)$ real-symmetric limit; over the tested range the fitted exponents are $1.05$ for the
unwhitened pair against $1.48$ for $\A$ (both for $N\ge16$), and the ratio in
\cref{tab:genquotient} falls monotonically rather than settling on a constant. If the
fixed-relative-accuracy requirement continues to impose the $\varepsilon^{-2}=O(N^4)$ factor of
\cref{sec:complexity}, the representation-plus-independent-sampling baseline therefore changes
from $O(kN^6N_{\mathrm{eval}})$ for the whitened worst case to $O(kN^5\log N\,N_{\mathrm{eval}})$
for the unwhitened pair. That is an asymptotic improvement in the representation factor, not a
constant-factor one.

Two qualifications are essential. First, $O(kN^5\log N\,N_{\mathrm{eval}})$ is a
representation-plus-shot baseline and not an end-to-end generalized-quotient complexity: it
excludes all four costs listed above, none of which is measured here, and a complete accounting
requires an end-to-end solver together with a covariance-aware shot analysis. Second, the
practical conclusion does not change. $O(kN^5\log N\,N_{\mathrm{eval}})$ remains far above the
$O(kN\,T_{\mathrm{it}})$ of a sparse classical eigensolver for the problems tested here, so
avoiding whitening improves the representation factor without removing the decision-precision
bottleneck or establishing any advantage.

\section{Numerical program}
\label{sec:numerical}

The numerical program is deliberately idealized. Exact circuit states are used in place of noisy prepared states, Pauli expectations are sampled from exact means with independent binomial noise, and no gate, readout, or decoherence error is modelled. This is not an oversight but the appropriate design for a resource lower bound: every one of these idealizations is favourable to the quantum method, so a measurement cost that is already prohibitive under them remains prohibitive under any realistic error model. The corollary is that results in this paper which are \emph{positive}---the depth-3 exact-state recovery of \cref{tab:vqe_depth} and the finite-shot localization accuracy of \cref{sec:results}---are upper bounds on achievable performance and must not be read as evidence of hardware feasibility.

All experiments are reproducible with the accompanying Python code and use no quantum software development kit. Exact circuit states are constructed directly, Pauli decompositions are computed explicitly, and finite-shot estimates are sampled from the exact Pauli means. The studies are designed to answer separate questions rather than to make a speedup claim.

Two six-DOF benchmarks are used and should not be confused. Benchmark~S is a fixed-base shear building with floor masses decreasing linearly from $220$ to $170$\,t and story stiffnesses from $220$ to $170$\,MN/m, giving $f_1=1.280$\,Hz; it is scaled by $s_t=U_G$ and padded with $\alpha=1.25$, and it backs the restart study of \cref{sec:results} and the certified-penalty run of Appendix~B. Benchmark~C is the $N=6$ member of the fixed-base chain family of Test~B, with $k=2.2\times10^8(1-0.35\xi)$\,N/m and element mass $2\times10^4(1-0.20\xi)$\,kg for $\xi\in[0,1]$ and lumped mass, giving $f_1=4.481$\,Hz; it is scaled by $s_t=\lambda_{\max}$ and padded with $\alpha=\max(1.10,U_G/s_t+10^{-3})$, and it backs Tests~A, D and E and the measurement floor. Both are one-dimensional and lumped-mass; the consistent-mass results are confined to Test~B.

\subsection{Test A: exact deflation threshold}

Benchmark~C is mass whitened and padded to three qubits. For targets $r=2,3,4$, a common penalty is swept across the exact threshold in \cref{eq:betathreshold}. Exact eigenvalues are used here only to verify the necessary-and-sufficient threshold; they are not counted as an operational preprocessing assumption. Thus $U_r=\varepsilon_r$ and $\delta_j=0$ in the threshold verification. In a deployable execution, Algorithm~\ref{alg:certvqe} instead uses either the Gershgorin policy \cref{eq:globalUr} or a recorded Ritz/Lanczos bound \cref{eq:ritzUr}. A second experiment rotates each prior projector by a controlled angle and compares the target-state rotation with \cref{eq:dkbound}.

\subsection{Test B: Hamiltonian and preprocessing scaling}

Fixed-base chains with $N\in\{4,8,16,32,64,128\}$ use smoothly varying stiffness and mass; this is the chain family of which Benchmark~C is the $N=6$ member, and it is the only test in which consistent mass appears. Lumped and consistent element mass matrices are compared. We record $U_G/\lambda_{\max}$, transformed density, nonzero Pauli terms, and greedy qubit-wise commuting groups. Cholesky and transformation structure are profiled through $N=512$. Wall-clock times are machine-specific and are not used as universal complexity claims.

\subsection{Test C: ansatz depth, optimizer restarts, and trainability}

Real-amplitude circuits use a CNOT ring. Fundamental-mode VQE is run for 8 and 16 DOFs at depths 1--3. The six-DOF first mode of Benchmark~S is repeated for 20 random initializations at depths 1 and 2 to distinguish systematic expressibility failure from local optimizer failure. Gradient variance is sampled at 2--12 qubits under uniform and near-identity initialization. The depth and gradient sweeps use chains from the Test~B family at the stated sizes, not either named six-DOF benchmark.

\subsection{Test D: finite-shot modal frequency}

For the exact first four modes of Benchmark~C, each Pauli term is measured independently with $10^2$, $10^3$, $10^4$, and $10^5$ shots. Monte Carlo estimates are converted to frequency using \cref{eq:freqrecover}. This isolates measurement propagation from ansatz and optimizer error.

\subsection{Test E: finite-shot damage localization}

Stiffness losses of 2\%, 5\%, 10\%, and 20\% are introduced at element 3 of Benchmark~C, counting from the base. The location is fixed rather than swept; across the six possible locations the exact normalized localization margin at 10\% loss ranges from 0.515 to 0.702, so element 3 at 0.654 is mid-range and not favourably selected. Baseline and damaged element energy fractions are estimated independently from Pauli measurements. Both estimates use the baseline element operators $\Hq_e$ of \cref{eq:He}; only the state and the normalization change between the healthy and damaged epochs. The element with maximum $D_e$ is selected. Each condition uses 200 Monte Carlo trials; random guessing has probability $1/6=16.7\%$.

\section{Results}
\label{sec:results}

\subsection{The theoretical penalty threshold is sharp}

\Cref{fig:penalty_threshold} shows that target-state overlap switches from zero to one at $\beta/\beta_{\min}=1$, up to the equality case where the ground state is degenerate. This numerical result verifies the theorem and replaces arbitrary tuning with a checkable spectral threshold.

\begin{figure}[t]
\centering
\includegraphics[width=\columnwidth]{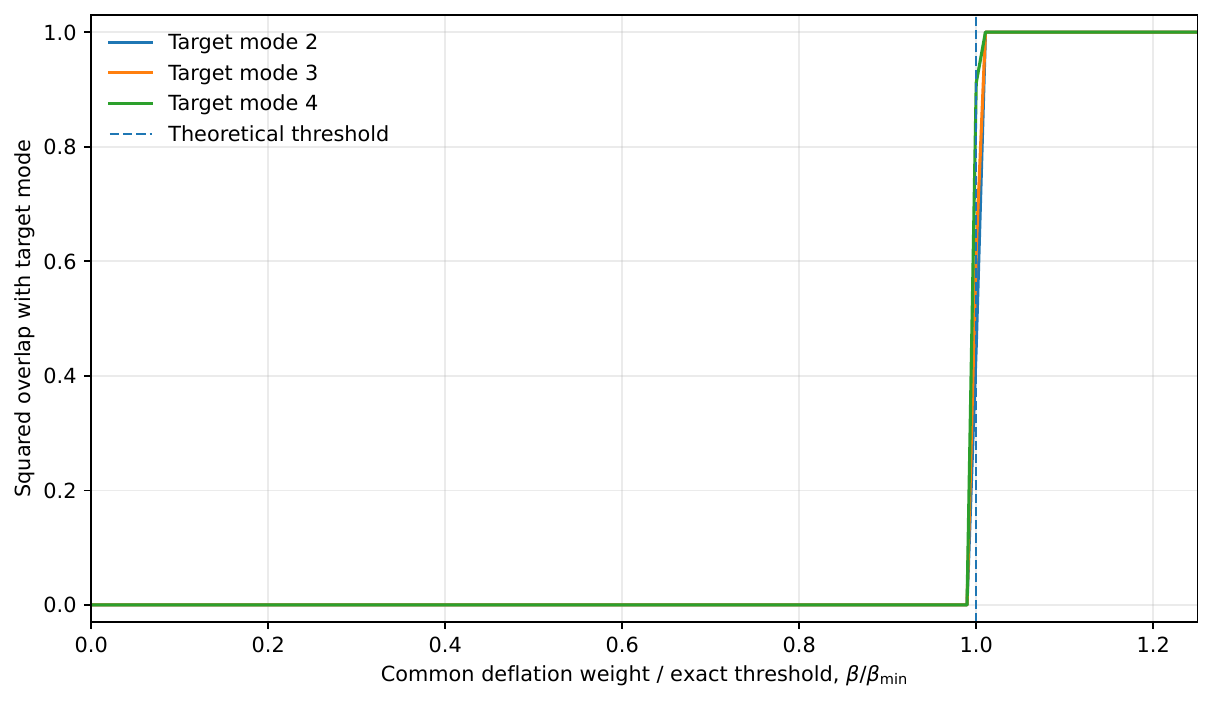}
\caption{Exact deflation sweep for target modes 2--4 on Benchmark~C, computed from the exact padded spectrum with no sampling. The target becomes the ground state only after the penalty exceeds the spectral threshold in \cref{eq:betathreshold}.}
\label{fig:penalty_threshold}
\end{figure}

Projector perturbations confirm the second half of the theory. The deflated separation is $g_r=0.05$, so \cref{eq:dkbound} applies while $\norm{\Delta_r}_2<g_r/2=0.025$. Within that range, projector rotation angles of 0.005 to 0.035 rad give perturbations $\norm{\Delta_r}_2=0.0034$ to $0.0239$ and target-state sine errors of 0.068 to 0.380, against Davis--Kahan bounds of 0.136 to 0.954. The bound holds at every point and is uniformly conservative by a factor of about two, correctly capturing the amplification of projector error by $\beta_j$. Beyond $\norm{\Delta_r}_2=g_r/2$ the guarantee lapses and the bound is vacuous: at a rotation of 0.06 rad the measured sine error is 0.508, but the target eigenvector is no longer certified to be isolated.

The perturbation is deliberately adversarial. The first projector is rotated toward the target mode itself, so the contamination lies exactly along the direction deflation is meant to protect; this is why a 0.02 rad projector error produces a 0.249 target rotation, a twelvefold amplification.

\subsection{Gershgorin scaling is not the dominant problem in the tested chains}

The empirical Gershgorin ratio is reported in \cref{tab:scaling}. Across both mass models and $N=4$--128, the ratio lies between 1.014 and 1.110. Thus the bound is mildly conservative for these one-dimensional chains, although no universal tightness is claimed.

The two-scale scheme remains useful because a different topology or material contrast can produce a looser bound, and because certified padding should not depend on the optimizer normalization.

\subsection{Mass whitening creates a measurable Pauli bottleneck}

\Cref{fig:density,fig:pauliscale} show the principal scaling result. At $N=128$, lumped mass yields transformed density 0.0233 and 576 nonzero Pauli terms; consistent mass yields density 0.3077 and 2950 terms. At $N=64$, greedy qubit-wise commuting groups increase from 64 for lumped mass to 221 for consistent mass.

\begin{figure}[t]
\centering
\includegraphics[width=\columnwidth]{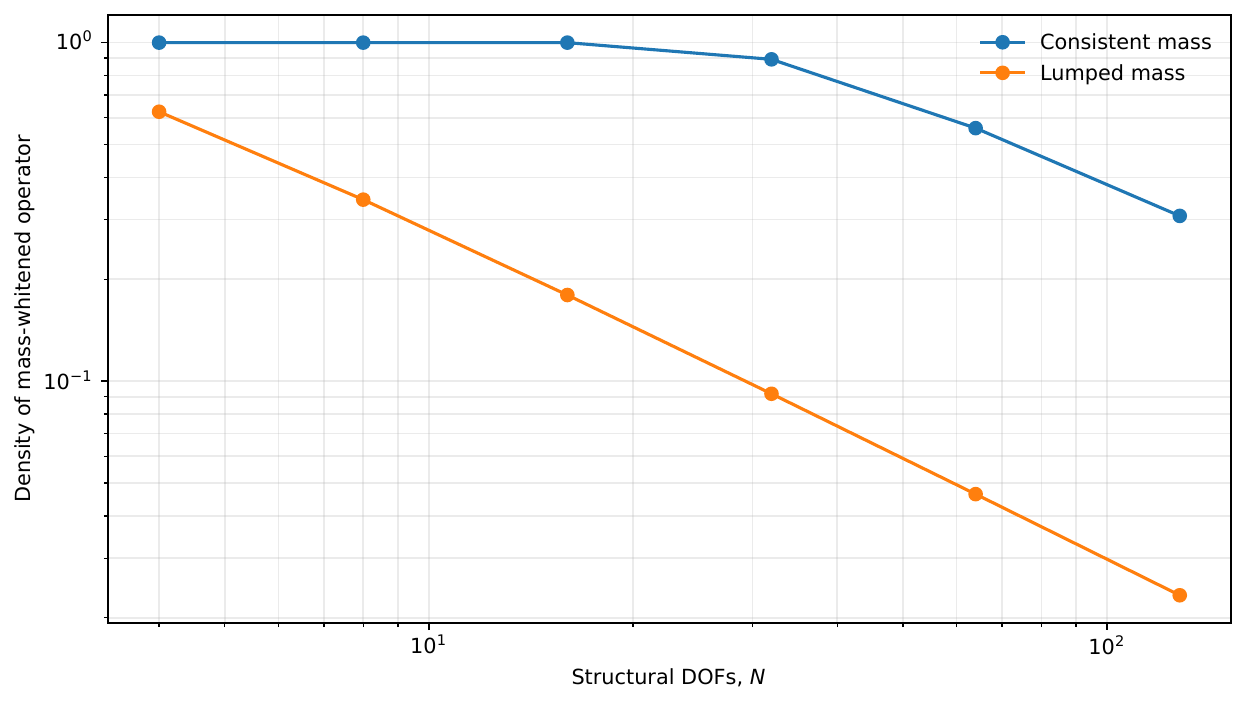}
\caption{Density of the mass-whitened operator for the fixed-base chain family of Test~B, evaluated exactly with no sampling, at the relative entry cut $\tau=10^{-12}$ of \cref{eq:density}. Consistent mass introduces substantially more long-range coupling than lumped mass in the tested chains.}
\label{fig:density}
\end{figure}

\begin{figure}[t]
\centering
\includegraphics[width=\columnwidth]{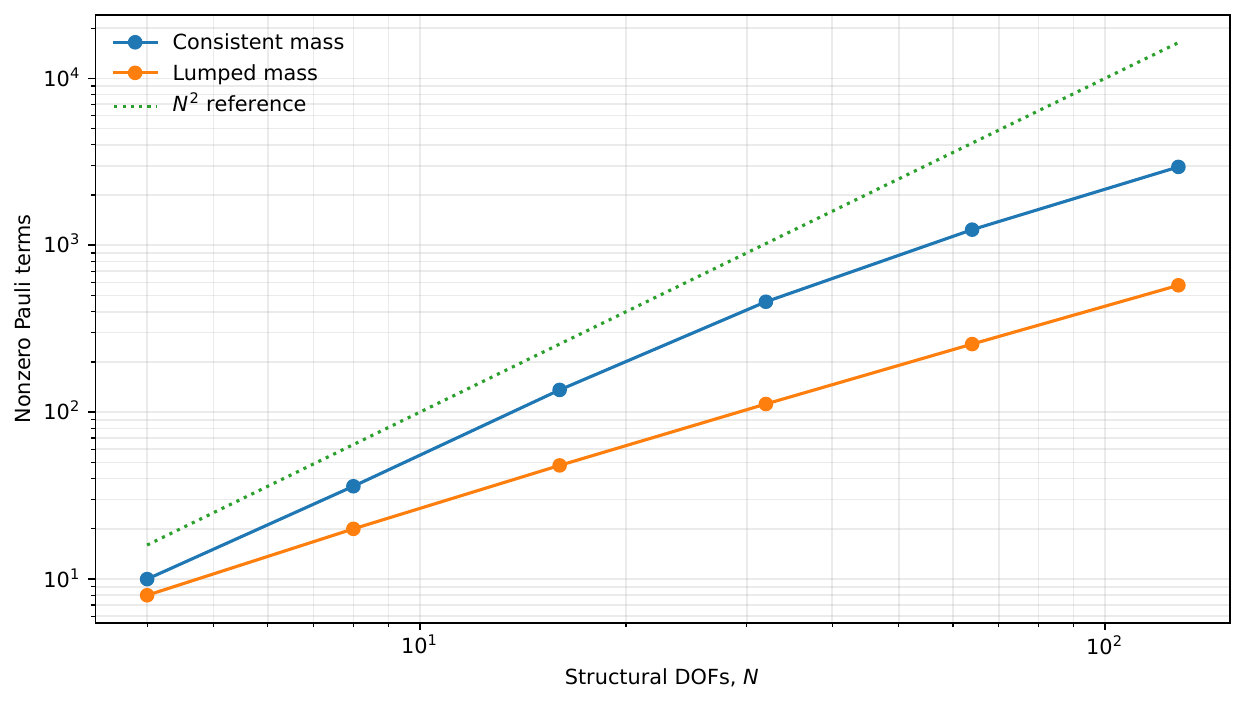}
\caption{Nonzero Pauli-term count versus structural DOFs for the chain family of Test~B, computed exactly at the absolute coefficient cut $10^{-10}$; the consistent-mass count is sensitive to that cut, as \cref{sec:resources} quantifies. The $N^2$ line is a reference, not a fitted asymptotic law.}
\label{fig:pauliscale}
\end{figure}

\begin{table}[t]
\centering
\caption{Selected Hamiltonian-scaling results. QWC denotes greedy qubit-wise commuting groups.}
\label{tab:scaling}
\begin{tabular}{lrrrrr}
\toprule
Mass & $N$ & $n_q$ & $U_G/\lambda_{\max}$ & Pauli terms & QWC groups\\
\midrule
Lumped & 16 & 4 & 1.053 & 48 & 16\\
Lumped & 64 & 6 & 1.022 & 256 & 64\\
Lumped & 128 & 7 & 1.014 & 576 & --\\
Consistent & 16 & 4 & 1.058 & 136 & 41\\
Consistent & 64 & 6 & 1.025 & 1241 & 221\\
Consistent & 128 & 7 & 1.017 & 2950 & --\\
\bottomrule
\end{tabular}
\end{table}

Preprocessing to $N=512$ confirms that the Cholesky factor of the chain remains banded, but $\A$ contains 1534 retained entries for lumped mass and 21554 for consistent mass. This is not evidence of universal dense scaling; it is direct evidence that whitening and mass formulation must be included in resource estimates.

\subsection{Circuit depth, not qubit count, controls exact-state recovery}

The 8- and 16-DOF fundamental-mode results are summarized in \cref{tab:vqe_depth} and \cref{fig:largervqe}. For 16 DOFs, depth 1 gives 81.75\% frequency error, depth 2 gives 5.16\%, and depth 3 gives $2.27\times10^{-9}$\%. The exact-state result shows that four qubits are enough to store the vector but not that a shallow circuit can prepare it.

\begin{table}[t]
\centering
\caption{State-vector VQE depth study for the fundamental mode. The failing rows are limited by ansatz expressibility and are stable under additional restarts; the near-zero entries are limited instead by optimizer tolerance and fall further as the restart budget is increased.}
\label{tab:vqe_depth}
\begin{tabular}{rrrrr}
\toprule
$N$ & Qubits & Depth & Parameters & Frequency error (\%)\\
\midrule
8 & 3 & 1 & 6 & 27.33\\
8 & 3 & 2 & 9 & $2.00\times10^{-12}$\\
8 & 3 & 3 & 12 & $8.59\times10^{-13}$\\
16 & 4 & 1 & 8 & 81.75\\
16 & 4 & 2 & 12 & 5.16\\
16 & 4 & 3 & 16 & $2.27\times10^{-9}$\\
\bottomrule
\end{tabular}
\end{table}

\begin{figure}[t]
\centering
\includegraphics[width=\columnwidth]{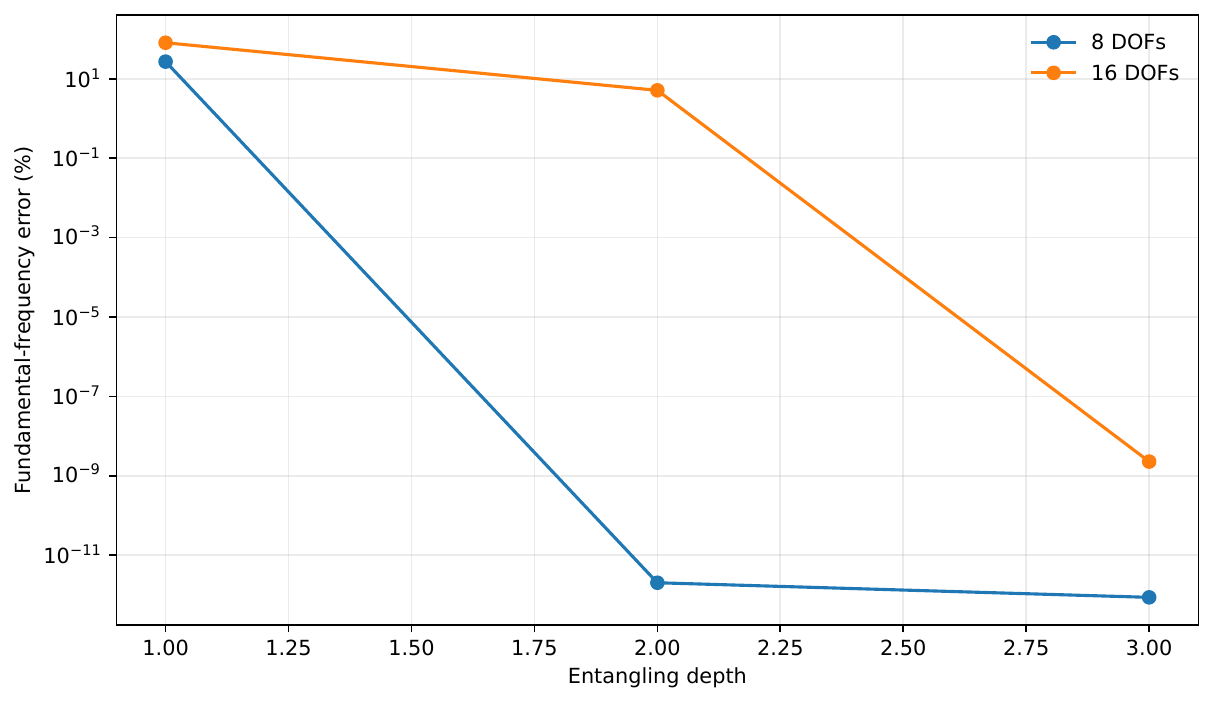}
\caption{Fundamental-frequency error versus circuit depth for 8- and 16-DOF members of the Test~B chain family. Exact state-vector optimization with no sampling noise, so the residual error is attributable to ansatz expressibility and the optimizer alone.}
\label{fig:largervqe}
\end{figure}

Twenty six-DOF restarts reinforce this interpretation. Depth 1 has a minimum first-mode error of 127.99\%, a median of 130.20\%, and a maximum of 130.20\%; depth 2 has a median error $4.1\times10^{-12}$\%. The shallow failure is therefore systematic ansatz error, not an unlucky optimizer seed.

\begin{remark}[Parameter-count lower bound on required depth]\label{rem:depth}
The depth dependence in \cref{tab:vqe_depth} follows from a parameter count. A real, normalized state in $D=2^{n_q}$ dimensions has $D-1$ free amplitudes, whereas the real-amplitude circuit of \cref{eq:ansatz} at depth $L$ has $n_q(L+1)$ parameters. Reaching an arbitrary target therefore requires
\begin{equation}
n_q(L+1)\ge 2^{n_q}-1,
\qquad\text{equivalently}\qquad
L\ge\frac{2^{n_q}-1}{n_q}-1.
\label{eq:depthbound}
\end{equation}
A parameter count bounds the dimension of the reachable manifold, so it establishes that a fixed-depth family cannot be surjective onto the state sphere; it does not by itself prove that one particular target is unreachable. The condition is nonetheless consistent with every row of \cref{tab:vqe_depth}: for $n_q=3$ it holds at $L=2$ ($9\ge7$) but not $L=1$ ($6<7$); for $n_q=4$ it holds at $L=3$ ($16\ge15$) but not $L=2$ ($12<15$). Since $2^{n_q}=N$, the required depth grows as $L=\Omega(N/\log_2 N)$, so a fixed-depth family cannot remain expressive as $N$ increases and the logarithmic qubit count does not translate into a logarithmic-depth circuit. The condition is necessary but not sufficient: the CNOT-ring entangler can further restrict the reachable manifold, so \cref{eq:depthbound} is a lower bound on the depth required for universal exact representation of arbitrary real states within this ansatz family. This is consistent with the systematic shallow-circuit failure reported above.
\end{remark}

\subsection{Gradient statistics raise, but do not settle, the barren-plateau question}

Two features of this diagnostic bound what it can establish, and both work against detecting a plateau rather than for it. The depth is held at $L=2$ while $n_q$ grows, so by \cref{rem:depth} the circuit is far from expressive over most of the range and comes nowhere near a unitary 2-design; the concentration mechanism that produces barren plateaus is therefore not something this experiment could exhibit even if it were operating. And the derivative is taken with respect to a single first-layer parameter, adjacent to $|0\rangle^{\otimes n_q}$, which is systematically the least plateau-prone choice available. The result below should be read as evidence that no exponential decay has set in for this particular shallow family and this particular parameter, not as a probe of the asymptotic regime.

\Cref{fig:gradients} reports gradient variance for a depth-2 circuit over 2--12 qubits, with bootstrap 95\% confidence intervals from the 80 samples per point. Under uniform initialization the variance declines from $2.62\times10^{-2}$ at 2 qubits to $9.82\times10^{-4}$ at 12 qubits. A power law in $n_q$ fits the uniform-initialization series better than an exponential: over the full range $R^2=0.96$ versus $0.83$, favouring the power law by $\Delta\mathrm{AIC}=15$; over the small-system-free range $n_q\ge4$ the advantage narrows to $R^2=0.90$ versus $0.85$ ($\Delta\mathrm{AIC}=4$, moderate). The decline does not stop: a linear fit to the tail ($n_q\ge7$) gives a shallow slope of $-0.12$ per qubit (95\% CI $[-0.23,-0.01]$), statistically nonzero but far below the per-qubit exponential rate that defines a barren plateau. Near-identity initialization produces smaller and nonmonotonic values, including $1.52\times10^{-5}$ at 12 qubits.

The decline is best read against the scale of the objective itself. Over the same range the variance of the energy across random parameters falls from $4.94\times10^{-2}$ to $2.32\times10^{-3}$, so the normalized quantity $\Var[\partial_a E]/\Var[E]$ is statistically flat: a linear fit of its logarithm gives a slope of $-0.036$ per qubit (95\% confidence interval $[-0.087,+0.016]$, $R^2=0.17$). Relative to the spread of the landscape it explores, the gradient does not shrink at all over $2\le n_q\le12$, and most of the raw decline in \cref{fig:gradients} reflects the concentration of the normalized spectrum rather than a loss of trainability. Thus at this fixed shallow depth the gradients shrink at an approximately polynomial, not exponential, rate.

That conclusion does not survive an increase in depth, and the fixed-depth diagnostic could not have revealed it. Repeating the sweep across $L\in\{2,4,8\}$, with the differentiated parameter drawn uniformly at random so the statistic is parameter-averaged rather than tied to $\theta_0$, the exponential decay rate steepens systematically: fitting $\log V_g$ against $n_q$ over $2\le n_q\le10$ gives $-0.42$ per qubit at $L=2$ (bootstrap 95\% confidence interval $[-0.49,-0.36]$ over 2000 resamples of the 80 draws at each $n_q$), $-0.62$ at $L=4$ ($[-0.67,-0.57]$) and $-0.73$ at $L=8$ ($[-0.77,-0.69]$), against the $-\ln 2\approx-0.69$ that characterizes $O(2^{-n_q})$ gradient concentration. The intervals separate the depths in both directions: at $L=2$ the decay is significantly shallower than that reference, whereas at $L=8$ the interval reaches it and the two are statistically indistinguishable. Restricting the fit to $n_q\ge7$ leaves the $L=8$ point estimate at $-0.78$ but widens the interval to $[-0.91,-0.64]$, so the restricted fit neither strengthens nor overturns the full-range one. The $L=8$ variance relative to $L=2$ falls monotonically for $n_q\ge5$, from $0.83$ to $0.12$, so this is a depth--width interaction and not an artefact of differentiating one of more parameters.

The implication is unfavourable. By \cref{rem:depth} \emph{universal} real-state expressibility requires $L=\Omega(N/\log_2N)$---at $n_q=10$ that is $L\ge101$, so $L=8$ lies far below the depth at which the family can cover a generic real state---yet a decay rate comparable to $2^{-n_q}$ is already measured there. Expressibility and trainability therefore pull against each other for this ansatz family on generic targets: the depths that could represent an arbitrary real state are far deeper than the depths at which gradient concentration has already set in. Two limits on this reading should be stated plainly. The parameter count bounds the dimension of the reachable manifold and so speaks to generic targets; it does not establish that the particular structural fundamental at $n_q=10$ is unreachable at $L=8$, only that the family is not surjective. And the sweep spans $n_q\le10$ for one ansatz family and one Hamiltonian family, so no asymptotic plateau is proved. What the result does remove is any reading in which the shallow-circuit success licenses optimism about deeper circuits.

\begin{figure}[t]
\centering
\includegraphics[width=\columnwidth]{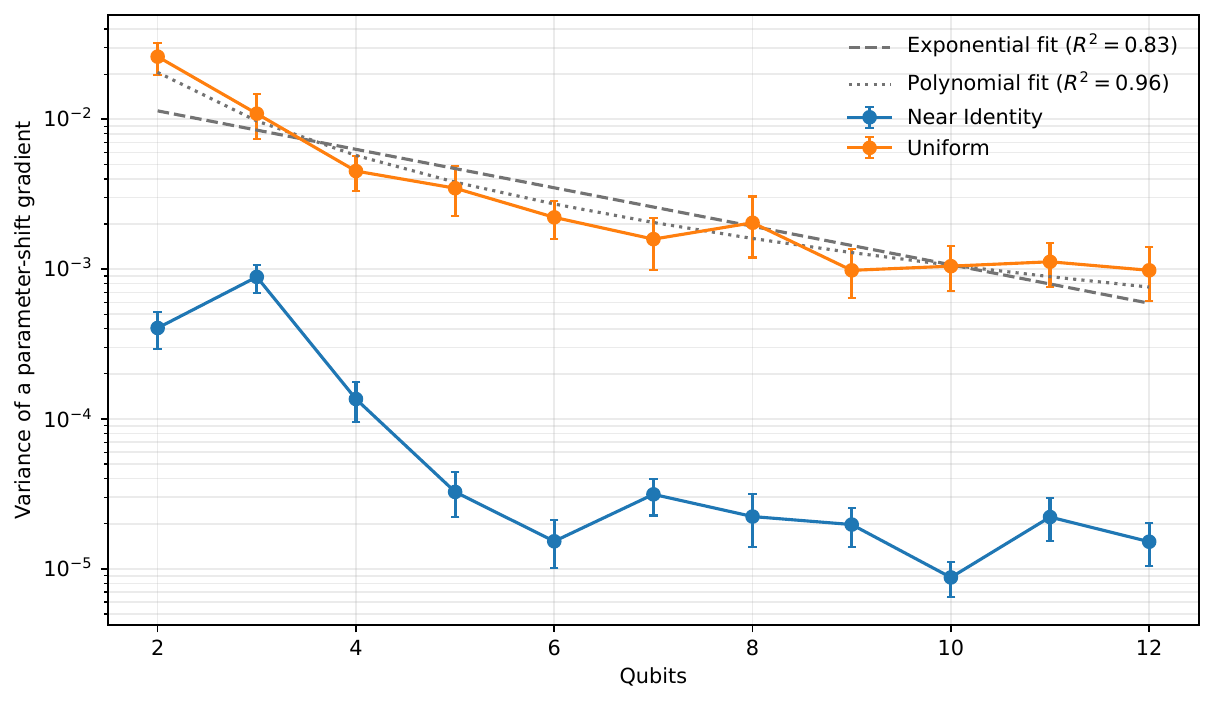}
\caption{Empirical gradient variance for depth-2 real-amplitude circuits on Test~B chains over 2--12 qubits, from 80 random parameter draws per qubit count, with bootstrap $95\%$ confidence intervals and polynomial and exponential fits to the uniform-initialization series. Exact state-vector gradients; the intervals reflect sampling of the parameter distribution, not shot noise. The variance is better fit by a power law than an exponential; the finite range is a diagnostic, not an asymptotic barren-plateau proof.}
\label{fig:gradients}
\end{figure}

\subsection{Finite-shot frequency estimation is most difficult for the first mode}

\Cref{tab:shotfreq} shows the independent-Pauli Monte Carlo results. At $10^4$ shots per Pauli term, frequency RMSE is 7.88\% for mode 1 but only 0.25\% for mode 4. Even at $10^5$ shots per term, mode-1 RMSE is 2.45\%. This follows from \cref{eq:freqvar}: the square-root transformation amplifies eigenvalue uncertainty at small $f_i$.

\begin{table*}[t]
\centering
\caption{Frequency RMSE (\%) under independent Pauli sampling and an optimistic QWC lower bound.}
\label{tab:shotfreq}
\begin{tabular}{lrrrr}
\toprule
Measurement budget & Mode 1 & Mode 2 & Mode 3 & Mode 4\\
\midrule
$10^2$ shots/term, independent & 64.20 & 10.72 & 4.49 & 2.57\\
$10^3$ shots/term, independent & 25.73 & 3.32 & 1.37 & 0.79\\
$10^4$ shots/term, independent & 7.88 & 1.00 & 0.47 & 0.25\\
$10^5$ shots/term, independent & 2.45 & 0.36 & 0.13 & 0.08\\
\midrule
QWC-optimal floor at matched total budget$^{\dagger}$ & 0.83 & 0.14 & 0.054 & 0.032\\
\bottomrule
\end{tabular}

\vspace{0.35em}
\parbox{0.97\textwidth}{\footnotesize $^{\dagger}$Benchmark~C's padded three-qubit Hamiltonian has 19 nonidentity Pauli terms and eight greedy qubit-wise-commuting groups; that count is insensitive to the coefficient cut, being 19 at every cut from $10^{-8}$ to $10^{-14}$. The frequency recovery \eqref{eq:freqrecover} clips a negative energy estimate to zero. That clip activates in 23.8\% of mode-1 trials at $10^2$ shots per term and 0.8\% at $10^3$, and in no trial of any mode at $10^4$ or $10^5$ shots: the two highest-budget rows are therefore unclipped plug-in estimates, while the $10^2$ mode-1 entry is biased low by construction and indicates the trend only. The independent rows are 400-trial Monte Carlo estimates, carrying a relative standard error of about 3.5\% (so the mode-1 entry is good to roughly $\pm0.09$); the floor is analytic, and the improvement factors quoted in the text compare it against the \emph{analytic} independent RMSE (2.285\% for mode~1), not against the Monte Carlo row above, so dividing the tabulated rows directly gives a slightly different ratio. The floor uses the same $1.9\times10^6$ total shots per mode as the $10^5$-shots-per-term baseline, exact within-group covariances, continuous optimal allocation, perfect simultaneous measurement, and no circuit, readout, or allocation overhead. It is therefore an optimistic lower bound, not an expected hardware result.}
\end{table*}

The independent experiment is a baseline, not an optimized measurement strategy. At matched total state preparations, exact-covariance QWC grouping lowers the theoretical RMSE floor by factors of approximately 2.5--2.8, but the first-mode floor remains 0.83\%. Thus grouping is valuable yet does not remove the low-frequency sensitivity. Classical shadows \cite{Huang2020}, hardware-efficient entangled measurements, and error mitigation \cite{Endo2021} should be compared in future work.

\subsection{Finite-shot damage localization is estimable, but only given the damaged model}

Exact-state agreement of the modal energy fractions with the classical solution is a consistency check only. The finite-shot localization study is more informative. \Cref{fig:shotdamage} shows that at $10^5$ shots per Pauli term a 10\% stiffness loss is localized correctly in 84.5\% of 200 trials (Wilson 95\% interval approximately 78.8--88.9\%), against a 16.7\% random baseline, and a 20\% loss in every trial. The median localization margin is positive at this budget ($+0.28$ for 10\% loss), meaning the damaged element typically outscores every intact competitor.

The energy-fraction ratio floors its denominator at $10^{-10}$; that floor activates in 4.9--6.8\% of samples at $10^2$ shots per term, below 0.5\% at $10^3$, and in none at $10^4$ or $10^5$, so the two budgets that carry the conclusions are unaffected. The budget dependence is steep. The same 10\% loss is localized in only 40.5\% of trials at $10^4$ shots per term (33.9--47.4\%) and remains only marginally above guessing at $10^3$ (23.0\%, Wilson 17.7--29.3\%; one-sided binomial $p=0.013$ against $1/6$ on the same $n=200$ trials). Small damage remains hard at every budget tested: a 2\% loss reaches only 32.0\% at $10^5$ shots per term (Wilson 25.9--38.8\%), with a still-negative median margin. Localization therefore succeeds in this benchmark, but only for moderate damage and only at measurement budgets that \cref{sec:discussion} quantifies as the dominant end-to-end cost.

\begin{figure}[t]
\centering
\includegraphics[width=\columnwidth]{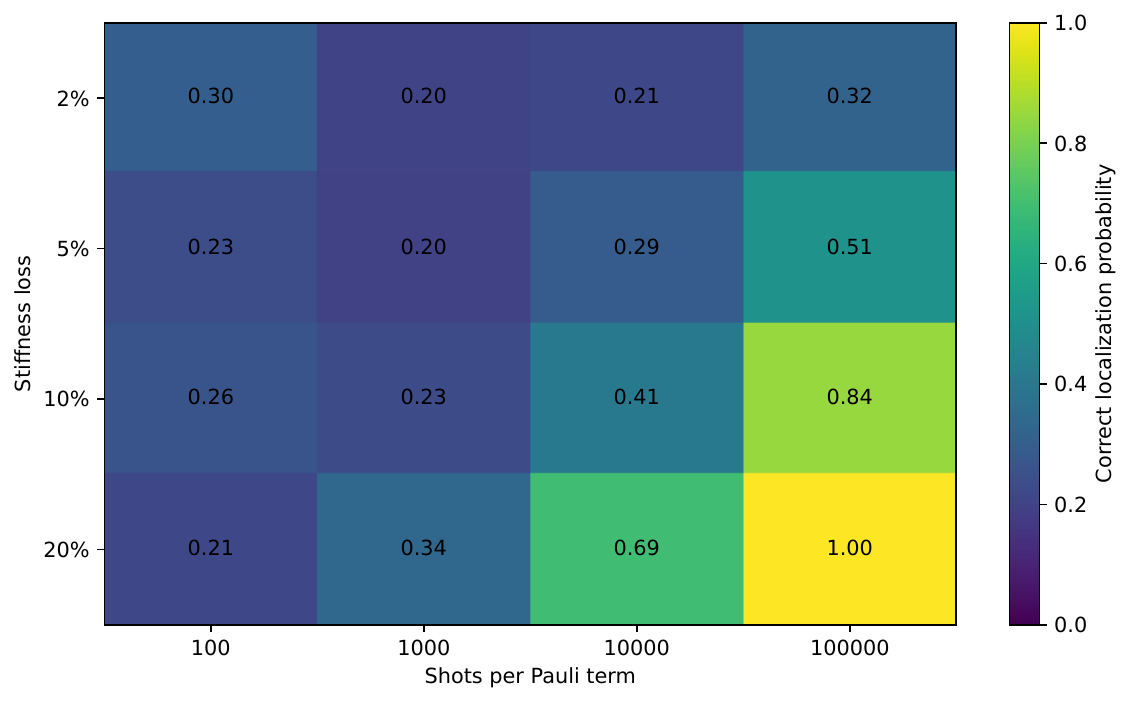}
\caption{Correct damage-localization probability from independently sampled element-energy observables, for Benchmark~C at 200 Monte Carlo trials per condition. The modal states are the \emph{exact} healthy and damaged eigenvectors, not variationally obtained ones, so these curves isolate measurement noise and are an upper bound on what an end-to-end pipeline would achieve. The random-guess baseline is $1/6=16.7\%$.}
\label{fig:shotdamage}
\end{figure}

\begin{remark}[Sensitivity to the element-operator convention]\label{rem:convention}
The choice in \cref{eq:mse} to build the numerator from the baseline $\K_e^0$ is consequential, not cosmetic. If the operator is instead built from the damaged element stiffness $\K_e(\mathbf d)$---an oracle that presumes the quantity being inferred---the explicit $(1-d_e)$ factor opposes the denominator and mode-shape changes and cancels most of the signal: the exact localization margin at 10\% loss falls by a factor of 11.2, measured accuracy drops from 84.5\% to 25.0\% (Wilson 19.5--31.4\%), and the design shot requirement rises by two orders of magnitude. Estimating a difference of operator expectations is therefore sensitive to which operator difference is taken, a point that applies to quantum change-detection observables generally.
\end{remark}

A cross-check against the frequency results of \cref{sec:results} sharpens the picture. The same 10\% loss shifts the modal frequencies by $-1.20$\%, $-0.16$\%, $-1.75$\% and $-0.24$\% for modes 1--4, while the corresponding finite-shot RMSEs at $10^5$ shots per Pauli term are 2.45\%, 0.36\%, 0.13\% and 0.08\%. The damage signal is thus below the measurement noise for the two lowest modes and an order of magnitude above it for the third. Frequency-based detection of this damage is possible at this budget, but only through the higher modes; the fundamental, on which most vibration-based monitoring practice rests, is the least usable of the four.

The study tests the damage formulation end to end and identifies measurement variance, not the formulation, as the limiting factor. Direct independent-Pauli estimation remains inefficient for small stiffness changes; grouped or shadow-based measurement of the change observable is the obvious next step.

\section{Generalization to two and three dimensions}
\label{sec:dimensions}

Every experiment above uses a one-dimensional chain, the most favorable case for Cholesky fill. This section tests whether the mapping and deflation theory carry over unchanged to higher spatial dimension, and whether the densification finding of \cref{sec:mapping,sec:resources} is a property of the 1D chain specifically or worsens as dimension increases.

\subsection{What remains dimension-agnostic}

Nothing in the theory assumes a specific spatial dimension. The Cholesky mapping and mass--Euclidean isometry (Proposition~1) hold for any $\M\succ0$, $\K=\K^T$. Spectral-safe padding (Proposition~2) and the scalar-scaling law (Proposition~3) depend only on the spectrum of $\A$. The exact deflation theorem and the Davis--Kahan perturbation bound (Theorem~1, \cref{eq:dkbound}) hold for any Hermitian operator with a spectral gap, independent of what physical system it was built from. The parameter-count depth bound (\cref{rem:depth}, \cref{eq:depthbound}) depends only on $n_q$, not on how the DOFs are arranged in space. The resource vector \cref{eq:resourcevector} and the finite-shot propagation formulas (\cref{eq:paulivar,eq:energyvar,eq:freqvar}) are likewise generic to any Pauli-decomposed Hamiltonian. Only the \emph{values} taken by $N_P$, the transformed density, and the Gershgorin ratio are specific to a given mesh, and those are what this section measures.

\subsection{Two- and three-dimensional scaling}
\label{sec:dim-scaling}

\begin{table*}[t]
\centering
\caption{Two- and three-dimensional scalar-mesh scaling, generalizing \cref{tab:scaling}. Grid sizes are chosen as exact powers of two, as in the 1D study. Densities of exactly $1.000$ are rounded rather than literal; the exact deficits are given in \cref{sec:dim-scaling}, where the 3D power-law exponent quoted is also noted to rest on three sizes only.}
\label{tab:dimscaling}
\begin{tabular}{llrrrr}
\toprule
Dimension & Mass & $N$ & $n_q$ & Density of $\A$ & Pauli terms\\
\midrule
2D & Lumped & 16 & 4 & 0.250 & 48\\
2D & Lumped & 64 & 6 & 0.070 & 256\\
2D & Lumped & 256 & 8 & 0.019 & 1280\\
2D & Lumped & 1024 & 10 & 0.005 & 6144\\
2D & Consistent & 16 & 4 & 1.000 & 136\\
2D & Consistent & 64 & 6 & 1.000 & 2079\\
2D & Consistent & 256 & 8 & 0.947 & 25082\\
2D & Consistent & 1024 & 10 & 0.433 & 147620\\
3D & Lumped & 64 & 6 & 0.086 & 256\\
3D & Lumped & 128 & 7 & 0.045 & 576\\
3D & Lumped & 1024 & 10 & 0.006 & 6144\\
3D & Consistent & 64 & 6 & 1.000 & 2080\\
3D & Consistent & 128 & 7 & 1.000 & 8251\\
3D & Consistent & 1024 & 10 & 0.908 & 312325\\
\bottomrule
\end{tabular}
\end{table*}

The mesh is a scalar grid of point masses connected by nearest-neighbor springs, the direct $d$-dimensional generalization of the 1D chain: the same two-node element pattern is applied along $d$ axes instead of one, with one DOF per node so that the comparison isolates spatial dimension rather than DOFs per node. Free-DOF counts are fixed at exact powers of two by fixing an entire boundary layer along one axis, generalizing the chain's fixed base. Stiffness and mass vary smoothly and asymmetrically across the grid axes, exactly as the 1D study varies properties along the chain, so that no accidental grid symmetry inflates or deflates the sparsity count. \Cref{tab:dimscaling} reports density and nonzero Pauli terms for lumped and consistent mass at matched qubit counts.

Lumped mass remains markedly sparser than consistent mass at every tested size in every dimension, as in the 1D case. Densities of exactly 1.000 in \cref{tab:dimscaling} are rounded: the 2D $N=64$ and 3D $N=128$ operators carry 2079 and 8251 nonzero Pauli terms against the real-symmetric maxima 2080 and 8256, the small deficits being accidental cancellations rather than structural sparsity. For consistent mass, the informative comparison is at matched $N$: at $N=1024$, 2D density has fallen to 0.433 while 3D remains at 0.908---the same DOF count, a visibly different regime. This is consistent with nested-dissection theory: 2D separators have size $O(\sqrt N)$, giving fill $O(N\log N)$ \cite{GeorgeNg1988}; the corresponding 3D separator size $O(N^{2/3})$ and fill $O(N^{4/3})$ follow from the same separator argument but lie outside the two-dimensional scope of that reference, and are quoted here as the standard extrapolation rather than as a cited result. On either reading, so the onset of sparsity-recovering behavior is expected to shift to larger $N$ as dimension increases, and 3D should lag furthest behind. Fitting a single power law for the nonzero Pauli count over the full tested range gives an increasing but modest exponent (1D $\approx\!1.66$, 2D $\approx\!1.69$, 3D $\approx\!1.79$; the 3D fit rests on only three sizes). These are full-range fits: restricting the 1D fit to $N\ge16$ gives 1.48 instead, which is the figure quoted in \cref{sec:complexity}, and the 1D local exponent continues to fall with $N$. Because density has not yet stabilized into a clean power-law regime in any dimension over the tested sizes, these exponents are reported as illustrative fits, not asymptotic laws, exactly as \cref{sec:results} already cautions for the 1D case. The Gershgorin ratio remains mild in every case tested (1.04--1.14), consistent with the 1D finding of \cref{tab:scaling} that Gershgorin looseness is not the dominant issue.

Declining density is a fractional effect: the absolute nonzero Pauli-term count continues to grow in every dimension tested (\cref{tab:dimscaling}), so this does not indicate that the measurement burden diminishes with $N$. Density and $N_P$ are structural properties of the operator, not complexities in themselves; they separately drive the space row and the measurement term of the time row in \cref{tab:complexity}. The practical relevance of eventual sparsity recovery is also limited by scale. For an idealized cubic mesh, the graph diameter (maximum node-to-node hop distance) is $d(N^{1/d}-1)$ in $d$ dimensions; matching the diameter of the $N=128$ 1D chain already tested (127 hops) would require an idealized square mesh of $N\approx4{,}160$ (12 qubits) or an idealized cubic mesh of $N\approx81{,}000$ (16.3 qubits). This is an illustrative order-of-magnitude estimate for a regular cubic mesh, not a claim about arbitrary geometries, but it places any such recovery for 3D structures far beyond near-term hardware and, likely, beyond the regime in which classical simulation of the whitened operator itself remains convenient.

These results hold under the locality-preserving ordering used throughout, and do not extend to irregular topology, block DOFs, realistic 3D frames, or alternative orderings; they show that the 1D densification finding is not an artifact of one spatial dimension, and that it does not improve---if anything it worsens and appears later---as dimension increases.

\subsection{Fill-reducing orderings trade one sparsity for the other}
\label{sec:ordering}

Every count reported above uses the locality-preserving row-major indexing, and a referee may
reasonably ask whether the dimensional deterioration is an artefact of that choice rather than
of dimension. It is not---but the ordering matters more than the dimension does, and in the
direction opposite to classical intuition. \Cref{tab:ordering} compares three symmetric
permutations of the structural graph at fixed $N$, fixed dimension, and fixed physics, since a
symmetric permutation changes no eigenvalue: the natural indexing, reverse Cuthill--McKee
(RCM), a standard bandwidth- and profile-reducing sparse-matrix ordering, and a fixed-seed random
scramble included as a worst-case bracket. RCM is not a fill-minimizing ordering in the sense of
AMD or nested dissection; it is used here because it is the common default, and it does reduce
the consistent-mass factor fill measurably below.

\begin{table}[t]
\centering
\caption{Effect of reordering on Cholesky fill and on measurement cost, at fixed $N$, fixed
dimension and fixed spectrum. $\mathrm{nnz}(\mathbf L)$ is the fill of the Cholesky factor of
$\M$, $\rho_\A$ the density \eqref{eq:density} of the whitened operator, and the last column
gives $N_P$ relative to the natural ordering. Lumped-mass $\mathrm{nnz}(\mathbf L)$ and
$\rho_\A$ are permutation-invariant, so those rows isolate the effect on Pauli locality alone.}
\label{tab:ordering}
\small
\setlength{\tabcolsep}{4pt}
\begin{tabular}{llrrrr}
\toprule
Mass & Ordering & $\mathrm{nnz}(\mathbf L)$ & $\rho_\A$ & $N_P$ & ratio\\
\midrule
\multicolumn{6}{l}{\emph{2D, $N=256$ ($n_q=8$)}}\\
Lumped     & natural & 256   & 0.019 & 1280   & 1.00\\
Lumped     & RCM     & 256   & 0.019 & 7168   & 5.60\\
Lumped     & random  & 256   & 0.019 & 28288  & 22.10\\
Consistent & natural & 3967  & 0.947 & 25082  & 1.00\\
Consistent & RCM     & 2845  & 0.948 & 32859  & 1.31\\
Consistent & random  & 5094  & 0.540 & 32896  & 1.31\\
\midrule
\multicolumn{6}{l}{\emph{3D, $N=512$ ($n_q=9$)}}\\
Lumped     & natural & 512   & 0.012 & 2816   & 1.00\\
Lumped     & RCM     & 512   & 0.012 & 39808  & 14.14\\
Lumped     & random  & 512   & 0.012 & 123136 & 43.73\\
Consistent & natural & 29619 & 1.000 & 113522 & 1.00\\
Consistent & RCM     & 19566 & 1.000 & 131326 & 1.16\\
Consistent & random  & 43942 & 0.808 & 131328 & 1.16\\
\bottomrule
\end{tabular}
\end{table}

Three readings follow. First, the natural ordering is not merely convenient, it is optimal in
the Pauli metric among those tested: the lumped-mass counts $1280$ and $2816$ attain the bound
of \cref{prop:lumped} exactly, $(n_q+2)2^{\,n_q-1}$, and no ordering tested improves on the
natural count in any of the four groups. Second, RCM does precisely what it is designed to do,
reducing the consistent-mass Cholesky fill by $28\%$ in 2D and $34\%$ in 3D, and is paid for in
measurement: $N_P$ rises by $31\%$ and $16\%$ respectively.

Third, and most sharply, the lumped-mass rows isolate the mechanism. There $\M$ is diagonal, so
$\mathbf L$ is diagonal and $\mathrm{nnz}(\mathbf L)=N$ under \emph{every} ordering, and
$\rho_\A$ is likewise permutation-invariant: a fill-reducing permutation has nothing whatever
to reduce. Yet $N_P$ rises by a factor of $5.6$ in 2D and $14.1$ in 3D under RCM, and by
$22.1$ and $43.7$ under a random scramble. Matrix sparsity and Pauli locality are therefore not
two views of one property. The first is invariant under symmetric permutation and the second is
not, and only the second controls the measurement cost.

The random rows bracket the range. The counts $32896$ and $131328$ are exactly the
real-symmetric maxima $2^{2n_q-1}+2^{n_q-1}$ of \cref{sec:resources}, so a scrambled ordering
does not merely worsen the count---it saturates the worst case. The full span from the
$O(N\log N)$ floor of \cref{eq:lumpedcount} to the $\Theta(N^2)$ ceiling is thus reachable by
permutation alone, at fixed $N$, fixed dimension and identical physics, which is a wider range
than anything attributable to spatial dimension in \cref{tab:dimscaling}.

The practical consequence inverts the classical preprocessing intuition. In a sparse classical
eigensolver one applies a fill-reducing ordering before factorizing; here that same step,
applied before the Pauli decomposition, increases the measurement cost---and in the lumped case
it does so while delivering no fill benefit at all. The $O(N\log N)$ result of
\cref{prop:lumped} is therefore contingent on \emph{not} applying standard classical
preprocessing, which should be stated as a condition on the result rather than left implicit.
This also supplies the evidence for the ordering caveat attached to \cref{eq:lumpedcount} in
\cref{sec:resources}.

\subsection{Validating the degeneracy diagnostics}
\label{sec:dim-degeneracy}

A 1D chain with smoothly varying properties has no repeated eigenvalues and so can never exercise the subspace-based diagnostic of \cref{eq:subspace}. A uniform two-dimensional grid can: it carries exactly repeated eigenvalues, although, as \cref{sec:neardegenerate} establishes, that repetition is algebraic rather than forced by the grid's diagonal reflection symmetry. Using a uniform (not the heterogeneous) grid from \cref{sec:dim-scaling}, an adjacent-eigenvalue pair was found with relative gap $2.2\times10^{-16}$, i.e.\ degenerate to floating-point precision.

For a degenerate pair, any rotation of the two eigenvectors within their shared subspace is an equally valid choice, so two independent solves (or a solver and a slightly perturbed rerun) may return differently oriented, but physically identical, bases. Let $\ph_a(\theta)=\cos\theta\,\ph_a+\sin\theta\,\ph_b$ be the pair rotated by angle $\theta$ within the degenerate subspace. Comparing $\ph_a$ to $\ph_a(\theta)$ via \cref{eq:mac} gives, since both are $\M$-orthonormal,
\begin{equation}
\MAC_{aa}^{(M)}(\theta)=\cos^2\theta,
\label{eq:macrotation}
\end{equation}
exactly---confirmed numerically to machine precision. This degrades from $1$ to $0.5$ at $\theta=45^\circ$ and to essentially $0$ at $\theta=90^\circ$, which by conventional MAC thresholds would be flagged as a severe mismatch, even though the physical subspace never changed. The subspace-angle diagnostic \cref{eq:subspace} is stated in the whitened representation, where the isometry of Proposition~1 makes a plain transpose product equivalent to an $\M$-weighted one; written directly in the physical representation, it evaluates $\varepsilon_{\mathrm{sub}}$ between $\mathrm{span}(\ph_a,\ph_b)$ and its rotated copy via $(\mathbf Y_C)^T\M\mathbf Y_Q$. Since the rotation is an isometry of the shared subspace, this overlap matrix is exactly a $2\times2$ rotation matrix, whose singular values are exactly one, giving $\varepsilon_{\mathrm{sub}}=0$ for every $\theta$. The measured values stay at the eigensolver's numerical floor ($\sim\!3\times10^{-8}$) across the full sweep. Since $\varepsilon_{\mathrm{sub}}=0$ follows analytically, this confirms the implementation rather than discovering the behaviour. The informative case---a \emph{near}-degenerate pair under a physical perturbation, where $\varepsilon_{\mathrm{sub}}$ is small but nonzero---is treated in \cref{sec:neardegenerate}.

This is a direct demonstration of why \cref{sec:observables}'s subspace comparison exists: a per-mode diagnostic can manufacture a false structural-health alarm from an arbitrary, physically meaningless basis choice whenever a structure has near-symmetric geometry, and the subspace-angle diagnostic is immune to exactly that failure mode by construction.

\subsection{Near-degenerate pairs and the gap at which the certificate lapses}
\label{sec:neardegenerate}

The exactly degenerate case above is analytically forced. The decision-relevant case is the
near-degenerate one, and it is reached by breaking the reflection symmetry physically: the
axis-0 springs are stiffened by a fraction $\delta$, which splits the pair with a relative gap
proportional to $\delta$. A $3\times3$ uniform grid pinned at one corner gives $N=8=2^3$ free
DOFs, so no spectral padding is required and $G_r$ in \cref{eq:physgap} reduces to the physical
gap $\varepsilon_{r+1}-\varepsilon_r$. The grid carries an exact repeated eigenvalue that the anisotropy
splits, but its origin is not a point-group degeneracy. The only operations of the square's
symmetry group that fix the pinned corner and preserve both $\K$ and $\M$ are the identity and
the diagonal reflection, and that $\mathbb Z_2$ admits only one-dimensional irreducible
representations, so it cannot force a twofold degeneracy. Consistently, the reflection acts as
$\mathrm{diag}(-1,+1)$ on the pair, placing the two modes in \emph{different} one-dimensional
representations; and the repeated eigenvalue sits at the same $\lambda=6.6\times10^{4}$ for each
$n_{\mathrm{side}}\in\{3,4,5,6,7\}$ tested here, with multiplicity $n_{\mathrm{side}}-1$, which no point-group argument
produces. The degeneracy is therefore algebraic rather than symmetry-protected. What the sweep
requires is only that it is exact and that a directional stiffness anisotropy splits it smoothly.

At every value of $\delta$ the \emph{same} small model discrepancy is applied---a $1\%$
stiffness loss on one spring---so the perturbation size $\norm{\Delta_r}_2$ is held essentially
constant ($2.10$--$2.15\times10^{-3}$ across the sweep) while the gap varies by a factor of
$50$. This isolates the effect of the gap alone. \Cref{tab:neardeg} reports the outcome.

\begin{table}[t]
\centering
\caption{A fixed $1\%$ stiffness discrepancy applied to a near-degenerate pair whose gap is
controlled by an axis asymmetry $\delta$. The perturbation is held constant
($\norm{\Delta_r}_2=2.10$--$2.15\times10^{-3}$ in the normalized representation); only the gap
changes. $\mathrm{DK}_{\mathrm{ind}}=2\norm{\Delta_r}_2/G_r$ and
$\mathrm{DK}_{\mathrm{sub}}=2\norm{\Delta_r}_2/g_{\mathrm{pair}}$ are the right-hand sides of
\cref{eq:dkbound} for the individual mode and for the pair treated as a block; a value
$\ge1$ is vacuous, since a sine never exceeds one. Probe: the interior axis-0 spring joining
nodes $(1,1)$ and $(2,1)$. $^{\ast}$Relative gap and $G_r$ are tabulated in units of
$10^{-3}$.}
\label{tab:neardeg}
\small
\setlength{\tabcolsep}{4pt}
\begin{tabular}{rrrrrrr}
\toprule
$\delta$ (\%) & gap$^{\ast}$ & $G_r^{\ast}$ & $\MAC_{rr}$ & $\varepsilon_{\mathrm{sub}}$
& $\mathrm{DK}_{\mathrm{ind}}$ & $\mathrm{DK}_{\mathrm{sub}}$\\
\midrule
0.1 & 0.82  & 0.22  & 0.026 & 0.0032 & 18.70 & 0.034\\
0.2 & 1.63  & 0.45  & 0.147 & 0.0032 & 9.36  & 0.034\\
0.5 & 4.06  & 1.12  & 0.990 & 0.0032 & 3.75  & 0.034\\
1.0 & 8.09  & 2.24  & 0.999 & 0.0032 & 1.89  & 0.034\\
2.0 & 16.04 & 4.45  & 1.000 & 0.0032 & 0.95  & 0.035\\
5.0 & 39.05 & 10.97 & 1.000 & 0.0033 & 0.39  & 0.036\\
\bottomrule
\end{tabular}
\end{table}

Three things follow. First, the per-mode diagnostic fails \emph{continuously} as the gap
closes, not abruptly: $\MAC_{rr}$ falls from $1.000$ to $0.026$ as the relative gap shrinks
from $3.9\times10^{-2}$ to $8.2\times10^{-4}$, crossing the conventional $0.9$ acceptance
threshold between relative gaps of $1.6\times10^{-3}$ and $4.1\times10^{-3}$. At the tightest
gap the two modes are nearly orthogonal, from a $1\%$ change in one spring. Second,
$\varepsilon_{\mathrm{sub}}$ is flat to three significant figures at $0.0032$ over the entire
sweep: the subspace diagnostic responds to the size of the discrepancy and not at all to the
proximity of the pair, which is the property \cref{eq:subspace} was introduced for and which
the exactly degenerate test could not show.

Third, and connecting to \cref{sec:deflation}, the deflation certificate lapses at a gap that
can be stated. Because $g_r\le G_r$ by \cref{eq:deflgap}, no choice of penalties can rescue the
target-mode bound once $G_r$ closes; even the certificate-optimal $\beta_j^\star$ of
\cref{eq:betastar} only attains $g_r=G_r$. The individual bound is therefore vacuous for
$\delta\le1\%$ and non-vacuous for $\delta\ge2\%$ in \cref{tab:neardeg}. Note that Weyl's
precondition $\norm{\Delta_r}_2<g_r/2$ and non-vacuity of \cref{eq:dkbound} are the same
inequality, so the bound does not degrade gracefully---it simply switches off. The pair treated
as a block behaves quite differently: its separation from the rest of the spectrum,
$g_{\mathrm{pair}}=\min(\varepsilon_r-\varepsilon_{r-1},\varepsilon_{r+2}-\varepsilon_{r+1})$,
stays near $0.12$ throughout, and $\mathrm{DK}_{\mathrm{sub}}$ never exceeds $0.054$ anywhere in
the sweep. Where the individual certificate is unavailable, the block certificate remains valid
by a margin of more than an order of magnitude.

These conclusions do not depend on where the discrepancy is placed. Sweeping the probe over all
$12$ springs, the individual certificate is vacuous at \emph{every} probe location for
$\delta=0.1\%$ ($\mathrm{DK}_{\mathrm{ind}}\in[8.8,27.7]$) and non-vacuous at every location for
$\delta=5\%$ ($[0.18,0.58]$), while $\varepsilon_{\mathrm{sub}}$ stays within
$[0.0021,0.0043]$ across all $72$ combinations and $\mathrm{DK}_{\mathrm{sub}}$ within
$[0.034,0.054]$. The measured rotations respect both bounds in all $72$ cases. The practical
reading for \cref{rem:degenerate} is that a near-degenerate pair should be deflated and reported
as a block, and that the individual certificate be trusted only while $2\norm{\Delta_r}_2<G_r$, which is
exactly the condition under which \cref{eq:dkbound} is non-vacuous. In this sweep that boundary
happens to fall where the relative gap is within about an order of magnitude of the model
discrepancy, but that correspondence is an empirical feature of these runs rather than a
general rule.

\section{Discussion}
\label{sec:discussion}

\subsection{What is now theoretically guaranteed}

Three properties have exact statements. First, Cholesky whitening preserves eigenvalues and converts mass orthogonality to Euclidean orthogonality. Second, safe padding places artificial states above a certified physical bound for any positive optimization scale. Third, exact deflation identifies mode $r$ if and only if every lower-mode penalty exceeds its spectral difference from the target. With approximate prior states, \cref{eq:deflperturb,eq:dkbound} quantify the resulting loss of accuracy.

No theorem is claimed for global convergence of the parameterized circuit optimizer. This distinction is essential: a correct deflated operator can still be inaccessible to a shallow ansatz or missed by a nonconvex optimizer.

\subsection{Why heuristic composite penalties are unsafe}

A weighted cost $E+\gamma\sigma_H^2+\zeta\ell_p$ changes the optimization landscape without a universal conversion between energy, residual, and leakage. An adaptive multiplicative $\beta$ rule is uncontrolled in both directions: it can stall just above the spectral threshold, where \cref{eq:dkbound} is vacuous, or grow without reference to $G_r$, where the certificate degrades linearly. Temporal overlap and frequency penalties can suppress genuine damage-induced changes. A scalar exponential integrity score can conceal a severe failure in one component behind good performance in another. Finally, a mixed-dimension ``information efficiency'' is uninterpretable without a reference task and cost model.

The proposed framework therefore uses theorem-based penalties, constrained diagnostics, independent epoch solutions, non-aggregated validation vectors, and raw resource counts.

\subsection{The actual scaling bottleneck}

For the tested chain models, Gershgorin looseness is modest. The more serious issue is that consistent-mass whitening changes operator structure, increasing nonzero Pauli terms and measurement groups; \cref{sec:dimensions} shows this bottleneck does not ease, and its onset shifts to larger $N$, as spatial dimension increases. The density of $\A$ is not an independent design variable: for fixed $\K$ and DOF ordering it is a deterministic consequence of the mass model through the Cholesky factor. The design tension is therefore a two-term causal chain rather than a trilemma,
\begin{equation}
\boxed{
\begin{aligned}
&\text{physical fidelity of }\M\\
&\quad\xrightarrow{\text{Cholesky}}\text{density of }\A\\
&\quad\xrightarrow{\text{Pauli}}\text{measurement cost}.
\end{aligned}}
\label{eq:trilemma}
\end{equation}
in which only the physical fidelity of $\M$ and the resulting measurement cost are genuinely in opposition; the intermediate density of $\A$ is induced, and is further modulated by DOF ordering and coefficient thresholding. Lumped mass may be cheaper quantum mechanically but can perturb high modes; consistent mass may be more faithful but measurement-heavy. A credible study must report both modal discretization error and quantum resource consequences. For the chains studied here the discretization penalty is small precisely where the quantum method operates. Taking consistent mass as the reference, the lumped model reproduces the first four frequencies to within 3.88\% at $N=16$, 0.25\% at $N=64$, and 0.06\% at $N=128$; the disagreement grows with mode index, reaching 4.9\% at mode $N/4$ and 42\% at the highest mode. Since the finite-shot RMSE of the fundamental is 2.45\% at $10^5$ shots per Pauli term (\cref{tab:shotfreq}), the extra fidelity bought by consistent mass at $N=128$ is roughly forty times smaller than the measurement noise it must be read through, while costing $2950$ Pauli terms against $576$. Under the whitened, independently measured $10^5$-shots-per-term configuration tested here, the extra low-mode fidelity of consistent mass is therefore not justified by its measurement cost---an argument for the lumped operator and its $O(N\log N)$ term count of \cref{prop:lumped}. That conclusion is specific to the whitening and measurement strategy tested: \cref{sec:genquotient} shows that avoiding whitening cuts the representation cost of consistent mass to $750$ terms at $N=128$, against $2950$ whitened.

As an illustration, consider the $N=128$, $k=4$ consistent-mass chain of \cref{tab:scaling}. \Cref{tab:concrete} contrasts the two routes with order-of-magnitude figures.

\begin{table}[t]
\centering
\caption{Illustrative end-to-end comparison for the $N=128$, $k=4$ consistent-mass chain. Preprocessing is finite-element assembly (classical) versus Cholesky factorization, whitening, and Pauli decomposition (quantum). Quantum figures are order-of-magnitude estimates, not measured device results.}
\label{tab:concrete}
\begin{tabular}{lll}
\toprule
Quantity & Classical & Quantum\\
\midrule
Qubits & --- & 7\\
Preprocessing & $O(N)$ & $O(N^2\log N)$\\
Nonzero Pauli terms & --- & 2950\\
Core work & ${\sim}1400$ matvecs & ${\sim}10^{11}$ circuits$^{\ddagger}$\\
Wall time & ${<}1$\,s & ${\sim}3.5\times10^{5}$\,s\\
\bottomrule
\end{tabular}

\vspace{0.35em}
\parbox{\columnwidth}{\footnotesize $^{\ddagger}$${\sim}10^{5}$ shots per term $\times\,2950$ terms $\times$ a few hundred optimizer evaluations $\times\,4$ modes; the wall time assumes an optimistic $1\,\mu$s per circuit execution with no queueing, calibration, or readout overhead.}
\end{table}

The gap spans several orders of magnitude and is dominated by measurement, not preprocessing, which supports the qualitative statement that classical methods are overwhelmingly preferable for the present benchmarks. Any regime in which a quantum route could become competitive lies outside the variational setting studied here: it would require either fault-tolerant quantum phase estimation rather than sampling-based VQE, or operators that are genuinely dense so that classical sparsity offers no advantage---neither of which applies to sparse structural finite-element models. This is stated only to delimit scope and is not a claim of attainable advantage.

\subsection{Implications for structural health monitoring}

The finite-shot damage result changes the interpretation of quantum modal strain energy. The expectation-value identity is mathematically valid and practically reachable, but only at a measurement cost that must be stated explicitly. Detection requires resolving a small difference between two noisy ratios of expectation values. For baseline and damaged estimates $\widehat a_d/\widehat b_d$ and $\widehat a_0/\widehat b_0$, a first-order variance is
\begin{equation}
\begin{aligned}
\Var(\widehat\eta_d-\widehat\eta_0)
&\approx \mathbf g_d^T\boldsymbol\Sigma_d\mathbf g_d
+\mathbf g_0^T\boldsymbol\Sigma_0\mathbf g_0,\\
\mathbf g&=\begin{bmatrix}1/b&-a/b^2\end{bmatrix}^T.
\end{aligned}
\label{eq:ratiovar}
\end{equation}
For equal shots $n$ per nonidentity Pauli term, write $\Var(\widehat D_e)\approx V_{D,e}/n$. A normal-approximation design criterion for detecting a nonzero score at two-sided type-I error $\alpha$ and power $1-\beta$ is
\begin{equation}
\boxed{
n_{\mathrm{det}}\gtrsim
\frac{(z_{1-\alpha/2}+z_{1-\beta})^2V_{D,e^\star}}
{D_{e^\star}^2}.
}
\label{eq:shotsdetect}
\end{equation}
For localization, let $e^{(2)}$ be the strongest intact competitor and $\Delta_D=D_{e^\star}-D_{e^{(2)}}$. A conservative Bonferroni pairwise requirement is
\begin{equation}
\boxed{
n_{\mathrm{loc}}\gtrsim
\frac{(z_{1-\alpha/(n_e-1)}+z_{1-\beta})^2V_{\Delta_D}}
{\Delta_D^2},
}
\label{eq:shotslocalize}
\end{equation}
where $V_{\Delta_D}$ is the per-shot variance coefficient of the score margin. The absolute-value signs are linearized at their exact signs. These expressions are design approximations, not finite-sample guarantees. They are also matched to the alternative rather than to the null: $D_e$ is a sum of absolute values, so under $D_e=0$ the estimator is folded and biased upward by $\sigma\sqrt{2/\pi}$ per term, and the nominal type-I error is not attained. A deployed test should calibrate its null by permutation rather than from \cref{eq:shotsdetect}.

The calculation uses the first four modes with equal weights, independent baseline/damaged numerator and denominator estimates, and independent element-score variances. It is conservative because beneficial shared-denominator covariance is ignored. For 10\% damage, $10^5$ shots per term give a localization signal-to-noise ratio of 1.44, consistent with the observed 84.5\% success. Approximately $8.7\times10^4$ shots per term are required to detect the damaged-element score with 80\% power, whereas family-wise localization against all five intact competitors requires roughly $4.8\times10^5$ per term. With 648 nonidentity operator-term settings across four modes, six elements, and two structural states, the latter corresponds to approximately $3.1\times10^{8}$ circuit shots under the independent baseline---four modal states measured to part-in-$10^3$ precision for a single binary maintenance decision. That figure counts every (operator, term) pair as a distinct setting and is therefore an upper bound. Each element operator's Pauli support is contained in the 19 nonidentity strings of $\widetilde\A$ itself, exactly so for the chains at $N\le16$ and to within 12\% of the union at $N=32$, and $\sum_e\Hq_e=\widetilde\A$; one measurement of those 19 strings per mode and structural state therefore serves all seven operators. Counting distinct settings instead gives $8\times19=152$ measurement settings rather than 648, a factor of $4.3$ reduction in the number of distinct circuits that must be prepared. The corresponding shot total is not obtained by rescaling $4.8\times10^{5}$ shots per term, because re-using the records introduces exactly the numerator--denominator covariance that \cref{eq:ratiovar} discards; a covariance-aware power calculation is required and is left to future work. What the support structure establishes is the reduction in distinct settings, not a proportional reduction in total shots. Measurement therefore remains the dominant end-to-end cost even where the decision itself is reachable, which makes covariance-aware joint measurement and direct estimation of change observables the most promising levers on that cost.

\subsection{End-to-end quantum advantage remains unestablished}

Any future advantage claim must include:
\begin{enumerate}[label=(\alph*)]
\item assembly and reduction of $\M$ and $\K$;
\item factorization or generalized-quotient preparation;
\item Hamiltonian construction and Pauli decomposition;
\item state preparation and repeated measurements;
\item optimizer evaluations and higher-mode deflation;
\item readout of decision-relevant structural observables;
\item comparison with sparse classical eigensolvers at matched accuracy.
\end{enumerate}
For the present benchmarks, classical methods are overwhelmingly preferable.

\section{Limitations and research agenda}
\label{sec:limitations}

The main structural models remain one-dimensional chains, the most favorable case for Cholesky fill; \cref{sec:dimensions} reports an initial two- and three-dimensional scalar-mesh generalization confirming that densification does not improve, and appears later in $N$, as dimension increases under the locality-preserving ordering studied here. That study nonetheless uses regular scalar grids, not the block DOFs, irregular topology, constraints, and material heterogeneity of real bridges and buildings, and the measured exponents must not be extrapolated to such structures or to alternative DOF orderings and block-sparse formulations. Pauli decomposition is evaluated only under binary amplitude encoding. The finite-shot study uses independent Pauli measurements and exact input states; it does not include circuit noise, state-preparation error, grouped covariance, or device calibration. The damage study uses stiffness-only damage and no environmental variability; long-term monitoring campaigns show that environmental effects can rival or exceed damage-induced modal changes \cite{PeetersZ24_2001}; damage identification on that same campaign relied on modal curvature rather than on a strain-energy index of the kind used here \cite{Maeck2001}. It also supplies the damaged mode shapes as exact eigenvectors of a known damaged operator, so it isolates measurement variance and does not model the identification problem: in deployment the damaged states would be obtained from measured response data, not from an eigensolve of the very model being inferred, whereas the element operators $\Hq_e$ of \cref{eq:He} are correctly built from the baseline model alone. Cholesky timings are platform-specific, while the structural sparsity counts are more portable.

The next high-impact study should combine:
\begin{enumerate}[label=(\roman*)]
\item a laboratory frame or bridge dataset with FDD/SSI uncertainty;
\item block-sparse three-dimensional finite elements and alternative DOF orderings;
\item Cholesky, generalized quotient, and domain-decomposed mappings;
\item grouped or shadow-based measurement of modal observables;
\item near-degenerate subspace tracking under temperature and damage;
\item a matched end-to-end comparison against ARPACK/Lanczos and reduced-order classical models.
\end{enumerate}

\section{Conclusions}
\label{sec:conclusion}

This paper establishes a stricter evaluation standard for variational quantum modal analysis. The mass-orthonormal Cholesky map is physically consistent, but its preprocessing and densification costs are not negligible. Gershgorin bounds can certify padding without fixing the optimizer scale. Higher-mode penalties admit an exact threshold, and approximate-projector perturbation theory shows that both undersized and oversized penalties degrade the available robustness certificate, and locates the penalty that minimizes it. Variance and leakage should be diagnostics or constrained refinements, not arbitrarily weighted additions to the primary energy. Temporal mode identity should be assigned after independent eigensolution so that continuity cannot mask damage.

The expanded experiments show both promise and difficulty. Increasing circuit depth recovers small exact-state modal problems, while a shallow ansatz fails systematically despite sufficient qubit dimension. In the tested chains, consistent mass substantially increases Hamiltonian density and Pauli count, and an initial two- and three-dimensional generalization shows this effect worsening, not improving, with spatial dimension under the locality-preserving ordering studied here---an ordering the paper also shows to matter more than the dimension does, since a fill-reducing permutation raises the Pauli count by up to an order of magnitude and a scrambled one saturates the worst case exactly. That generalization also validates a subspace diagnostic on a genuine degenerate pair unavailable in one dimension. Finite-shot frequency and damage estimates remain costly---the analytic design estimate for family-wise localization of a 10\% stiffness loss at 80\% power is ${\sim}10^{8}$ circuit executions---with the lowest mode particularly sensitive to sampling variance. The 1D Pauli growth is moreover decelerating rather than settling onto a power law---the local exponent falls to $1.25$ over the last tested doubling---so the fitted exponents throughout should be read as illustrative fits, not asymptotic laws. These results do not establish quantum advantage. They identify the mathematical conditions and resource bottlenecks that a future structural quantum algorithm must overcome.

Three conclusions are durable. First, logarithmic qubit storage does not imply logarithmic circuit or measurement cost: the encoding is compact, the state preparation and the observable are not. Second, deflation can be certified spectrally, but approximate projectors and small gaps make penalty selection and the block treatment of near-degenerate pairs consequential rather than incidental. Third, on the variational structural benchmarks studied here, measurement and decision precision remain the binding resource and no quantum advantage is established---a statement about this algorithmic regime, not about the fault-tolerant, qubitization and Hamiltonian-simulation routes surveyed in \cref{sec:intro}. The main conclusion is therefore precise: qubits are necessary to encode the state, but spectrally valid modes, expressive circuits, sparse observables, and decision-level measurement precision are necessary to solve the engineering problem.

\section*{Data and code availability}
The accompanying archive contains the complete LaTeX source, bibliography, generated figures, CSV results, and Python scripts required to reproduce every reported number. This covers the exact-state, scaling, optimizer-restart, certified-penalty, trainability, depth-sweep, finite-shot and damage-localization studies, the two- and three-dimensional generalization and degeneracy diagnostic of \cref{sec:dimensions}, and the oracle-convention sensitivity case of \cref{rem:convention}. A validation script re-executes a deterministic regression suite into temporary directories and diffs the regenerated tabular outputs against the committed data at zero tolerance, rather than checking committed files against hard-coded expectations. Ten datasets are regenerated and diffed by default and five more under \texttt{-{-}full}, which additionally reruns the Monte Carlo studies; the remaining manuscript-quoted values are bound by assertion to their generated summaries rather than by independent regeneration. The script was itself fault-injection tested, catching 7 of 7 deliberately introduced defects with no false positive on the clean package. Bit-level reproducibility is claimed only for the environment recorded in \texttt{environment\_versions.txt} and \texttt{requirements-lock.txt} (Python 3.12.3, NumPy 2.3.5, SciPy 1.17.0, pandas 2.2.3, Matplotlib 3.10.8, and pdf\TeX{} from \TeX{} Live 2026). The deterministic studies depend on the BLAS in use and on pandas' stable-sort handling of exactly tied coefficients, so the results are expected to reproduce under that environment and are not claimed to be bit-identical across platforms.

\section*{Conflict of Interest}
The authors declare no conflict of interest.

\appendices
\section{Element matrices and the scope of the models used}
Assembly follows \cref{eq:assembly} from standard element stiffness and mass matrices; for a two-dimensional truss element these are the usual $4\times4$ direction-cosine forms, and for an Euler--Bernoulli beam the usual $4\times4$ bending and consistent-mass forms. None is used here. Every model solved in this paper is a scalar spring--mass chain or grid with one degree of freedom per node, chosen so that the comparison isolates spatial dimension and mass model from the confound of degrees of freedom per node. Extending the resource accounting to vector-valued elements changes $N$ per node and therefore $n_q$, but not the mapping, the deflation theory, or the measurement analysis.

\section{Sequential modal algorithm with provable deflation thresholds}
\begin{algorithm}[t]
\caption{Mass-orthonormal sequential VQE with provable deflation thresholds}
\label{alg:certvqe}
\begin{algorithmic}[1]
\Require Reduced $\M\succ0$, $\K=\K^T$; target modes $m$; ansatz family; tolerances.
\State Compute $\M=\mathbf L\mathbf L^T$ using a sparse ordering where appropriate.
\State Form $\A=\mathbf L^{-1}\K\mathbf L^{-T}$ by triangular solves.
\State Compute certified $U_G$ and choose optimization scale $s_t$.
\State Form $\Hq_p$ with $\alpha=U_G/s_t+\delta_p$.
\State Decompose the required operators and group compatible measurements.
\For{$r=1,\ldots,m$}
  \State Select and record the bound policy: $U_r=U_G/s_t$, or compute $U_r=\vartheta_r^{(p)}+\delta_{\mathrm{fp}}$ by a $p$-dimensional Ritz/Lanczos pass and record $C_{\mathrm{bound}}$.
  \For{$j<r$}
    \State Obtain $\delta_j$ from exact validation data or the declared residual-plus-shot confidence rule in \cref{eq:deltabound}.
    \State Set $\beta_j=U_r-\widehat\varepsilon_j+\delta_j+\tau$.
  \EndFor
  \State Minimize $C_r(\theta)=\langle\psi(\theta)|\widehat\Hq_r|\psi(\theta)\rangle$ over multiple initializations.
  \State Optionally solve the constrained variance refinement (supplementary material).
  \State Reject or flag the state if any component of \cref{eq:diagnosticvector} exceeds its declared tolerance.
  \State Recover $f_r$ and $\ph_r$ using \cref{eq:freqrecover,eq:recover}.
\EndFor
\end{algorithmic}
\end{algorithm}

For the reproducible six-DOF exact-state validation, the declared acceptance set was
\begin{equation}
\begin{aligned}
\epsilon_f&\le10^{-4}\%, &
\rho&\le10^{-5}, &
\sigma_H^2&\le10^{-10},\\
\ell_p&\le10^{-8}, &
o&\le10^{-8}, &
1-\MAC^{(M)}&\le10^{-6}.
\end{aligned}
\label{eq:exampletolerances}
\end{equation}
These values are numerical verification tolerances, not universal SHM thresholds; several are unattainable at any sampling budget, the variance bound $\sigma_H^2\le10^{-10}$ most obviously so, and $\langle\Hq_p^2\rangle$ carries its own measurement cost that \cref{eq:resourcevector} does not itemize. They accept all four depth-2 exact-state modes and reject the depth-1 results. Field applications must replace them with tolerances calibrated to model discrepancy, environmental variability, and decision consequences.

Algorithm~\ref{alg:certvqe} was executed on this benchmark under the quantum-only Gershgorin policy \cref{eq:globalUr}, so that no auxiliary classical eigensolve enters the procedure. With $s_t=U_G$ the certified bound is $U_r=1$, the radius $\delta_j$ is the residual $\sigma_{H,j}$ returned by the run itself, and $\tau=10^{-3}$. At depth 2 the resulting penalties lie between $0.68$ and $0.98$: they satisfy the threshold of Theorem~1 by construction rather than by tuning, and they adapt to solution quality as intended---at depth 1, where the extracted modes are poor and $\sigma_{H,j}$ is correspondingly large, the same rule returns $0.81$ to $1.02$, automatically widening the margin to absorb the worse projectors. They are also three to four and a half times smaller than the ad hoc common penalty $\beta=3$ used for comparison. Measured against \cref{eq:betastar} rather than against the bare threshold, the Gershgorin policy places the excess $U_r-\varepsilon_r+\delta_j+\tau$ at $2.0$, $2.9$ and $4.5$ times $G_r$ for targets $r=4,3,2$, i.e. always past saturation and never in the divergent regime, at the cost of a Davis--Kahan certificate loosened by factors of $1.3$, $1.8$ and $3.2$ relative to the optimum. Conservatism in the bound policy is what buys this margin. At depth 2 the certified run recovers all four modes with frequency error below $10^{-12}$\%, mass-weighted MAC equal to one to twelve digits, prior-mode overlap below $4\times10^{-15}$, and padding leakage below $10^{-15}$, meeting the acceptance set above with several orders of margin; at depth 1 it fails for the ansatz reasons of \cref{rem:depth}, not because of the penalty rule.

\section*{Acknowledgment}
In accordance with IEEE policy on the disclosure of AI-generated content, the authors record the following. An AI assistant (Claude, Anthropic) was used during the revision of this article: initial drafts of \cref{sec:genquotient,sec:ordering,sec:neardegenerate} and of revised passages elsewhere, together with the corresponding analysis and validation scripts, were AI-generated and subsequently edited; publication records for several references were retrieved and checked with AI assistance; and an AI-assisted audit of the article's derivations and numerical claims was performed, with its findings assessed before adoption. All definitions, theorems, proofs, numerical results, and conclusions were checked by the authors, who take full responsibility for the content of this article. No AI system is an author of this work.

\bibliographystyle{IEEEtran}
\bibliography{references}

\ifdefined\ACCESS\EOD\fi
\end{document}


\title{Supplementary Material: Qubits Are Not Enough --- Deflation and Measurement
Limits in Variational Quantum Modal Analysis}
\maketitle

\noindent This supplement contains three procedures referenced from the main article.
Each is developed and analyzed but \emph{not} evaluated experimentally, which is why it
is placed here rather than among the article's results. Equation and section numbers are
prefixed with \textsf{S}; all references to the main article are by name.

\section{Lexicographic variance refinement}
\label{app:lexico}
Adding $\gamma\sigma_H^2+\zeta\ell_p$ to the energy requires a conversion between quantities with no common
scale. The alternative used here states the trade-off as constraints at declared tolerances:
\begin{equation}
\begin{aligned}
&\min_{\theta}\ \sigma_H^2(\theta)\\
&\text{s.t. }C_r(\theta)\le C_r^\star+\epsilon_E,\qquad
\ell_p(\theta)\le\epsilon_p,\\
&\hspace{2.0em}\max_{j<r}|\langle\widehat u_j|\psi(\theta)\rangle|^2\le\epsilon_O.
\end{aligned}
\label{eq:lexico}
\end{equation}
The energy ordering is preserved within $\epsilon_E$ rather than traded against variance at an
application-defined rate. This refinement is not exercised in any experiment reported here.

\section{Exact relation between projector error and state angle}
For normalized vectors $u$ and $\widehat u$, the rank-one projector difference satisfies
\begin{equation}
\norm{\widehat u\widehat u^T-uu^T}_2=\sin\angle(\widehat u,u).
\label{eq:projectorangle}
\end{equation}
Therefore the deflation perturbation may be written as
\begin{equation}
\norm{\Delta_r}_2\le
\sum_{j<r}\beta_j\sin\angle(\widehat u_j,u_j),
\label{eq:anglepert}
\end{equation}
which gives a direct rule for balancing penalty margin against the quality of previously extracted modes.

\section{Post-hoc temporal mode tracking}
\label{app:tracking}
Every monitoring epoch is solved independently using the same independently defined objective. A previous solution may initialize the circuit, but it does not enter the objective. After solution, define for candidate pair $(i,j)$
\begin{equation}
\mathbf z_{ij,t}=
\begin{bmatrix}
\log(f_{j,t}/f_{i,t-1})\\
1-\MAC_{ij,t}^{(M)}
\end{bmatrix},
\qquad
c_{ij,t}=\mathbf z_{ij,t}^T\boldsymbol\Sigma_{i,0}^{-1}\mathbf z_{ij,t},
\label{eq:trackingcost}
\end{equation}
where $\boldsymbol\Sigma_{i,0}$ is estimated from healthy baseline variability. MAC-based comparison of identified modes is standard in operational modal analysis \cite{Reynders2012}; the assignment formulation used here is stated as this paper's own construction rather than as established practice, and what is adapted is the uncertainty normalization and the insistence that pairing occur strictly after an independent eigensolve. Mode labels are assigned by
\begin{equation}
\pi_t^*=\argmin_{\pi}\sum_i c_{i,\pi(i),t}.
\label{eq:hungarian}
\end{equation}
A change alarm is based on $c_{i,\pi(i),t}$ exceeding a baseline quantile. A $\chi^2$ calibration is not appropriate here: $1-\MAC$ is non-negative and concentrates at zero, so the quadratic form is not $\chi^2_2$ distributed and the empirical quantile should be used instead. Because pairing occurs after independent eigensolution, the objective cannot pull a damaged mode toward its healthy predecessor. This is weaker than unbiasedness of the alarm: the assignment itself can still mislabel under damage-induced mode veering, after which the change statistic is referenced to the wrong baseline mode, and modes that go unidentified in an epoch---routine in operational modal analysis---need an explicit missing-label rule that \cref{eq:hungarian} does not provide. A further practical restriction is that $\MAC^{(M)}$ requires $\M$ at every degree of freedom, whereas output-only identification returns shapes only at instrumented coordinates; a sensor-space unweighted MAC must be substituted there, at the cost of the mass-orthogonality interpretation. Near degeneracy, the same procedure is applied to principal angles between subspaces.

This rule is not exercised in any experiment reported here.

\bibliographystyle{IEEEtran}
\bibliography{references}